\tikzset{radiation/.style={{decorate,decoration={expanding waves,angle=45,segment length=4pt}}}}
\newtheorem*{rep@theorem}{\rep@title}
\newcommand{\newreptheorem}[2]{%
\newenvironment{rep#1}[1]{%
 \def\rep@title{#2 \ref{##1}}%
 \begin{rep@theorem}}%
 {\end{rep@theorem}}}
\newtheorem{theorem}{Theorem}
\newtheorem{remark}{Remark}
\def\VR{\kern-\arraycolsep\strut\vrule &\kern-\arraycolsep}
\def\vr{\kern-\arraycolsep & \kern-\arraycolsep}
\def\mv{{\mathcal V}}
\def\ms{{\mathcal S}}
\def\e{\mathbb{E}}
\def\styp{\mathcal{A}_{\epsilon}^n}
\begin{document}
%
\title{On the Equivalency of Reliability and Security Metrics for Wireline Networks}


\author[]{Mohammad Mahdi Mojahedian}
\author[]{Amin Gohari}
\author[]{Mohammad Reza Aref\thanks{This work was partially supported by Iran National Science Foundation (INSF) under contract No. 92/32575.}}

\affil[]{\footnotesize Information Systems and Security Lab. (ISSL), Sharif University of Technology, Tehran, Iran
\\m\_mojahedian@ee.sharif.edu, aminzadeh@sharif.edu, aref@sharif.edu}


\allowdisplaybreaks
\renewcommand\Authands{ and }


%


\date{}

\maketitle

\begin{abstract}
In this paper, we show the equivalency of weak and strong secrecy conditions for a large class of   secure network coding problems. When we restrict to linear operations, we show the equivalency of ``perfect secrecy and zero-error constraints" with ``weak secrecy and $\epsilon$-error constraints". 
\end{abstract}


%

\section{Introduction}
\label{sec:intro}

Reliable and secure communication requires low error probability and low information leakage. But there are different metrics for error probability and information leakage (such as weak, strong, or perfect secrecy). 
Two important reliability metrics are $\epsilon$ or zero probability of error. An $\epsilon$-error criterion requires the (average or maximal) error probability to vanish as the blocklength increases, while a zero-error criterion, demands the error to be exactly zero for every given bloklength. Three important security metrics are weak, strong, or perfect secrecy. A weak notion of secrecy requires the \emph{percentage} of the message that is leaked to vanish as the code blocklength increases, while a strong notion of secrecy requires the \emph{total amount} of leaked information (not its percentage) to vanish as the blocklength increases. Perfect secrecy requires absolutely \emph{zero} leakage of information, for every given bloklength.

These reliability and security metrics lead to different notions of capacity which could be quite different. For instance, zero-error capacity, which was originally introduced by Shannon \cite{shannon56}, could be zero in a point-to-point channel, while the $\epsilon$-error could be non-zero for the same channel. One can then ask ``how capacity behaves under different reliability and security metrics?"
There are some previous works that address this interesting question. In \cite{laneff11,changr10}, the authors showed that in the network coding problem with co-located sources, the rate region does not increase by relaxing zero-error to $\epsilon$-error condition. Maurer et al. in \cite{mawo00} proved the rate region equivalency of weak and strong secure conditions in the source model secret key agreement problem. In \cite{MoGoAr15}, the equivalency of weak and perfect secrecy conditions (with $\epsilon$-error probability) for the secure index coding problem is shown. Moreover, it is shown that zero-error probability could be achieved at the cost of a multiplicative constant. To the best of our knowledge, no other work except \cite{MoGoAr15} has concentrated on the equivalency of weak and perfect secrecy. But the setup of this problem, reviewed in Fig.~\ref{fig:Secure_IC},  is restricted. For instance, the adversary is assumed to have full access to the communication links and the shared keys are either shared between pairs of nodes, or all of the nodes (no key is shared between subsets of size three for instance). Furthermore, the network topology of index coding is a special case of wireline networks. While there are many works addressing the security aspects of wireline networks \cite{CaiYe02,ChengYe14,CuiTr10,CuiTr13,CzapFr15,RouayhebSo12,FeldmanMa04,HuangTr13,MishraFr13,SilvaKs08,fragouli16,dau2011secure} in various settings, as far as we know, none of the works in the literature study how the secrecy region changes with different criteria in secrecy constraints in the secure network coding problem. Nonetheless, important aspects of secure communication such as secure throughput in the presence of an active adversary who can corrupt a limited number of links has been considered. For more details about the works in the secure network coding problem, one can refer to \cite{fragouli16}.

\begin{figure}[t]
\centering
\resizebox {.7\columnwidth} {!} {
\begin{tikzpicture}[scale=1]

\draw[thick] (-.5,0) rectangle (2,1) node [pos=.5] {$C=f(\mathbf{M},\mathbf{K})$};
\draw[thick,->] (.75,1.5) -- (.75,1) node [pos=-.5] {$\mathbf{K}=\{K,K_1,\cdots,K_t\}$};

\draw[thick,->] (-1,.5) -- (-.5,.5) node at (-2.75,.5) {$\mathbf{M}=\{M_1,\cdots,M_t\}$};

\draw[thick,dashed] (2,.5) -- (3.5,.5);
\draw[thick,dashed] (3.5,-3) -- (3.5,3.5);
\draw[thick,dashed,->] (3.5,3.5) -- (4.5,3.5);
\draw[thick,dashed,->] (3.5,-3) -- (4.5,-3);
\draw[thick,dashed,->] (3.5,1) -- (4.5,1);

\draw[thick,dashed] (2.75,.5) -- (2.75,-2);
\draw[thick,dashed,->] (2.75,-2) -- (2,-2) node at (.75,-2) [red] {Eavesdropper};

\draw[thick,dotted] (6,0) -- (6,-1);

\draw[thick] (4.5,3) rectangle (7.5,4) node [pos=.5] {$g_1(C,\mathbf{S}_1,K,K_1)$};
\draw[thick,->] (7.5,3.5) -- (8,3.5) node [right] {$M_1$};
\draw[thick,->] (6,4.5) -- (6,4) node [pos=-.5] {$K,K_1,\mathbf{S}_1$};

\draw[thick] (4.5,.5) rectangle (7.5,1.5) node [pos=.5] {$g_2(C,\mathbf{S}_2,K,K_2)$};
\draw[thick,->] (7.5,1) -- (8,1) node [right] {$M_2$};
\draw[thick,->] (6,2) -- (6,1.5) node [pos=-.5] {$K,K_2,\mathbf{S}_2$};

\draw[thick] (4.5,-3.5) rectangle (7.5,-2.5) node [pos=.5] {$g_t(C,\mathbf{S}_t,K,K_t)$};
\draw[thick,->] (7.5,-3) -- (8,-3) node [right] {$M_t$};
\draw[thick,->] (6,-2) -- (6,-2.5) node [pos=-.5] {$K,K_t,\mathbf{S}_t$};
\end{tikzpicture}
}
\caption{\footnotesize{The schematic of 
perfectly secure index coding problem. This is a generalization of Shannon's cypher system \cite{shannon49} to an index coding setup, which was introduced by Birk and Kol \cite{biko98} in the context of satellite communication and studied further in  \cite{lust09,AlonLub2008,bazi11,tesa12,blkl13,blkl10,biko98,arbban13,shka14,nete12}. In the secure index coding problem, there is a transmitter sending $t$ messages $M_1,M_2,\cdots,M_t$ to the $t$ legitimate receivers in the presence of an eavesdropper. Each receiver $i,~i\in[t]$ has a side information set $\mathbf{S}_i$ which is a subset of messages $\{M_1,M_2,\cdots,M_t\}$ except $M_i$. Furthermore, there is a common key $K$ shared among all the legitimate parties, and private keys $K_1,K_2,\cdots,K_t$ shared between the transmitter and each of the receivers. The transmitter applies a (randomized) function on the messages and keys to compute the public code $C$. Then, $C$ is broadcast, and all the receivers including the eavesdropper can hear $C$. Each receiver $i$ applies a function on the information available to it, namely $K$, $K_i$ and messages in $\mathbf{S}_i$ to compute $M_i$. The goal is to find the minimum number of information bits that should be broadcast by the server so that each client can recover its desired messages with \emph{zero-error} probability, and further, eavesdropper could not retrieve any information about the messages by having $C$ (\emph{perfect secrecy}).}
}
\label{fig:Secure_IC}
\end{figure}
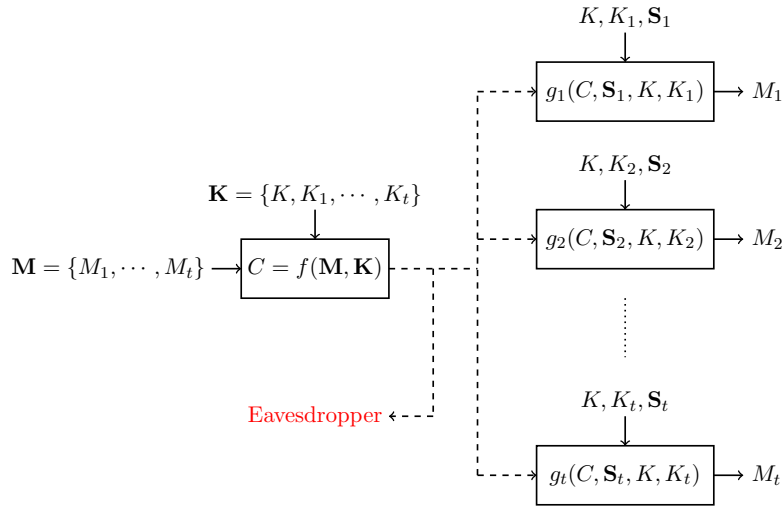

\textbf{Our contribution:}
 In this paper, we consider a general wireline network consisting of sources, intermediate nodes, and sinks, which are interconnected by error-free links. The links are directional with given capacities. Thus, wireline network can be represented by a directed weighted graph. This graph is allowed to have directed cycles. The source nodes have messages that are desired by sink nodes. 
Moreover, nodes in the network have access to infinite private randomness (only available to the nodes themselves), and also a number of rate-limited shared keys. Each key is shared among a subset of source, relay or destination nodes. These secret keys are helpful in hiding the messages from an eavesdropper who has access to a subset of links. 

Our main result is to show that changing weak to perfect condition and $\epsilon$-error to zero-error constraint, does not affect the achievable secure rate region of linear network coding  (if nodes are restricted to linear operations). When the nodes are allowed to do non-linear operations, we show that weak and strong secrecy are equivalent. 

\textbf{Notation:} Random variables are denoted by capital letters and their values by lowercase letters. We use $[k]$ to denote the set $\{1,2,\dots, k\}$. For a given subset $\mathcal{S}\subset [t]$ and a sequence of random variables $M_1,M_2,\cdots,M_t$, we use $M_\mathcal{S}$ to also denote the set $\{M_i: i\in \mathcal{S}\}$. When $\mathcal{S}=[t]$ is the full set, instead of $M_{[t]}$ we also use bold font to denote full sets, or its vector form, \emph{i.e.,} we use $\mathbf{M}$ to either denote the message set $\{M_1,M_2,\cdots,M_t\}$, or the vector $\begin{bmatrix}
M_1,M_2,\cdots,M_t
\end{bmatrix}$. Whether $\mathbf{M}$ is a set or a vector is clarified in the context. The total variation distance between two pmfs $p_X$ and $q_X$ is defined as
$$\|p_X-q_X\|_1=\frac 12\sum_x|p_X(x)-q_X(x)|.$$
We use $\mathbbm{1}[\cdot]$ to denote the indicator function; it is equal to one if the condition inside $[\cdot]$ holds; otherwise it is zero. Finally, all the logarithms in this paper are in base two.

\section{Definitions}
\label{sec:definition}

We assume that there are $t$ messages $M_1,M_2,\cdots,M_t$. Let us denote the set of all messages by $\mathbf{M}=\{M_1,M_2,\cdots,M_t\}$. 
As shown in the Fig.~\ref{fig:Randomized_Network}, the wireline network we consider in this paper consists of source nodes, receiver nodes (sink nodes) and some intermediate relay nodes. 
The nodes (source, sink and intermediate nodes) are interconnected by error-free point-to-point links. In addition, there exists an eavesdropper who is able to hear some of the links. 
Each source node has access to a subset of messages.  Similarly, each sink node desires to obtain a subset of messages. The source and sink nodes are part of the wireline network.

\begin{figure}
\centering
\resizebox {.7\columnwidth} {!} {
\begin{tikzpicture}[scale=1] 
\tikzset{->-/.style={decoration={
  markings,
  mark=at position #1 with {\arrow{>}}},postaction={decorate}}}
  
\draw[draw=none,rounded corners,dashed,fill=red!40!white, fill opacity=0.3] (-3,-2) rectangle (3,2);

\node [red] at (0,2.5) {Intermediate nodes and error-free directed links};

\draw[rounded corners,draw=none,fill=blue!40!white, fill opacity=0.3] (-4.3,-2) rectangle (-3.7,2);

\draw [fill=black] (-4,1.5) circle (.05cm) node [below] {$\mathrm{S}_1$};
\draw [->-=.5] (-4,1.5) -- (-3,1.5);

\draw [fill=black] (-4,.75) circle (.05cm) node [below] {$\mathrm{S}_2$};
\draw [->-=.5] (-4,.75) -- (-3,.75);

\draw [fill=black] (-4,-1.5) circle (.05cm) node [below] {$\mathrm{S}_t$};
\draw [->-=.5] (-4,-1.5) -- (-3,-1.5);

\draw [densely dotted] (-4,-1) -- (-4,0);

\node [blue] at (-4,-2.5) {Sources};

\draw[rounded corners,draw=none,fill=blue!40!white, fill opacity=0.3] (3.7,-2) rectangle (4.3,2);

\draw [fill=black] (4,1.5) circle (.05cm) node [below] {$\mathrm{D}_1$};
\draw [->-=.5] (3,1.5) -- (4,1.5);

\draw [fill=black] (4,.75) circle (.05cm) node [below] {$\mathrm{D}_2$};
\draw [->-=.5] (3,.75) -- (4,.75);

\draw [fill=black] (4,-1.5) circle (.05cm) node [below] {$\mathrm{D}_u$};
\draw [->-=.5] (3,-1.5) -- (4,-1.5);

\draw [densely dotted] (4,-1) -- (4,0);

\node [blue] at (4,-2.5) {Sinks};

\draw [densely dotted] (-1.2,.4) -- (-.7,.6);
\draw [densely dotted] (-1.2,-.4) -- (-.7,-.6);

\draw [densely dotted] (1.2,.4) -- (.7,.6);
\draw [densely dotted] (1.2,-.4) -- (.7,-.6);

\draw [->-=.5] (-1.7,.2) -- (-1.2,.4);
\draw [->-=.5] (-1.7,-.2) -- (-1.2,-.4);
\draw [->-=.5] (-1.7,-.6) -- (-1.2,-.4);
         
\draw [fill=black] (1.2,.4) circle (.05cm);
\draw [fill=black] (1.2,-.4) circle (.05cm);

\draw [fill=black] (0,0) circle (.05cm);
\draw [fill=black] (-1.2,.4) circle (.05cm);
\draw [fill=black] (-1.2,-.4) circle (.05cm);
\draw [fill=black] (-1.7,.6) circle (.05cm);
\draw [fill=black] (-1.7,.2) circle (.05cm);
\draw [fill=black] (-1.7,-.6) circle (.05cm);
\draw [fill=black] (-1.7,-.2) circle (.05cm);

\draw [fill=black] (1.7,.6) circle (.05cm);
\draw [fill=black] (1.7,.2) circle (.05cm);
\draw [fill=black] (1.7,-.6) circle (.05cm);
\draw [fill=black] (1.7,-.2) circle (.05cm);

\draw [->-=.5] (-1.7,.6) -- (-1.2,.4);
\draw [->-=.5] (-1.7,.2) -- (-1.2,.4);
\draw [->-=.5] (-1.7,-.2) -- (-1.2,-.4);
\draw [->-=.5] (-1.7,-.6) -- (-1.2,-.4);

\draw [->-=.6] (1.2,.4) -- (1.7,.6);
\draw [->-=.6] (1.2,.4) -- (1.7,.2);
\draw [->-=.6] (1.2,-.4) -- (1.7,-.2);
\draw [->-=.6] (1.2,-.4) -- (1.7,-.6);

\draw [densely dotted] (-2.5,0) -- (-2,0);
\draw [densely dotted] (2,0) -- (2.5,0);
\draw [densely dotted] (-1.2,-.2) -- (-1.2,.2);
\draw [densely dotted] (1.2,-.2) -- (1.2,.2);
\draw [densely dotted] (-1.7,.3) -- (-1.7,.5);
\draw [densely dotted] (-1.7,-.3) -- (-1.7,-.5);
\draw [densely dotted] (1.7,.3) -- (1.7,.5);
\draw [densely dotted] (1.7,-.3) -- (1.7,-.5);

\draw [->-=.5] (-1.2,.4) -- (0,0);
\draw [->-=.5] (-1.2,-.4) -- (0,0);

\draw [->-=.5] (0,0) -- (1.2,.4);
\draw [->-=.5] (0,0) -- (1.2,-.4);

\end{tikzpicture}
}
\caption{The directed graph representation of a wireline network. Nodes of the network are connected to each other via directed  links of limited capacity. The directed graph is allowed to have cycles. In this figure, there are $t$ source nodes and $u$ sink nodes. The models allows for shared secret keys between various subsets of the nodes. Each node produces its outputs on its ongoing links based on its inputs, shared keys and its own private randomness.}
\label{fig:Randomized_Network}
\end{figure}

 There is also a set of keys $\mathbf{K}=\{K_1,K_2,\cdots,K_\Delta\}$ of limited rates, each of which is shared among a subset of the nodes. Hence, every node can use its available keys for encoding. Moreover, each  source or relay nodes can use a private randomness. Let us denote the set of all private randomness vectors by the set $\mathbf{W}=\{W_1,W_2,\cdots,W_\Theta\}$. Random variables $M_1,M_2,\cdots,M_t,K_1,K_2,\cdots,K_\Delta,W_1,W_2,\cdots,W_\Theta$ are mutually independent and uniform over their alphabet sets.

The edges of the wireline network have limited capacity. For a code of blocklength $n$, an edge with capacity $C_e$ can carry at most $n(C_e+\epsilon_n)$ bits where $\epsilon_n$ converges to zero as $n$ tends to infinity. Similarly, if the rate of message $M_i$ is $R_{M_i}$, then in a code of blocklength $n$, $M_i$ is a binary sequence of length $nR_{M_i}$. The same can be said of the rate of the shared keys $R_{K_i}$. The goal of the nodes of the network is to maximize the communication rates $R_{M_i}$ while minimizing the key rates $R_{K_i}$ as much as possible in such a way that the desired reliability (error probability condition at sinks) and security conditions are met.\footnote{Private randomness is commonly considered as a free resource and studying its rate is not of interest.} The resulting fundamental trade-off between $R_{M_i}$ and $R_{K_i}$ describes the capacity region of the problem.

Fixing a coding strategy by the nodes in the network, the eavesdropper will end up with a collection of observations from the network. We use the random variable $\mathbf{C}$ to denote all the information the eavesdropped has obtained. Random variable $\mathbf{C}$ is a function of $\mathbf{M}$, $\mathbf{K}$ and $\mathbf{W}$,
\begin{align*}
\mathbf{C}=f(\mathbf{M},\mathbf{K},\mathbf{W}).
\end{align*}

\underline{Linear Network Coding:}

In linear network coding, we assume that there is a finite field $\mathbb{F}$. Each variable $M_i$, $K_i$ and $W_i$ is a string of independent and uniformly distributed symbols from field $\mathbb{F}$. All the coding operations are restricted to taking weighted linear combinations in $\mathbb{F}$. Then, eavesdropper's information $\mathbf{C}$ can be expressed as
\begin{align}
\mathbf{C}={A}\mathbf{M}+{B}\mathbf{K}+{G}\mathbf{W},\label{eqn5}
\end{align}
for some matrices $A$, $B$ and $G$ where
\begin{align*}
\mathbf{M}&=\begin{bmatrix}
M_1,M_2,\cdots,M_t
\end{bmatrix}^{\mathrm{T}}\\
\mathbf{K}&=\begin{bmatrix}
K_1,K_2,\cdots,K_\Delta
\end{bmatrix}^{\mathrm{T}}\\
\mathbf{W}&=\begin{bmatrix}
W_1,W_2,\cdots,W_\Theta
\end{bmatrix}^{\mathrm{T}}.
\end{align*}

\underline{Decoding conditions:}

\begin{itemize}
\item[--] Zero-error decoding

Each receiver is able to decode its desired messages with exactly zero-error probability for every given blocklength.

\item[--] $\epsilon$-error decoding

Each receiver is able to recover its desired message with vanishing probability of error as the blocklength grows.
\end{itemize}

\underline{Secrecy conditions:}

\begin{itemize}
\item[--] Perfect Secrecy

Assuming that random variables $K_1,K_2,\cdots,K_\Delta,W_1,W_2,\cdots,W_\Theta$ are mutually independent and uniform over their alphabet sets, the conditional pmf
$p(\mathbf{C}=\mathbf{c}|\mathbf{M}=\mathbf{m})$ should not depend on the value of $\mathbf{m}$, for any given $\mathbf{c}$. Equivalently, for any distribution on input message set $\mathbf{M}$, we should have
\begin{equation}
\label{def:perfect_sec}
I(\mathbf{M};\mathbf{C})=0, \qquad \forall p_{\mathbf{M}}(\mathbf{m}),
\end{equation}
as long as the message set $\mathbf{M}$, the key set $\mathbf{K}$ and private randomness set $\mathbf{W}$ are mutually independent.

\item[--] Strong secrecy

In strong secrecy, the independence between $\mathbf{M}$ and $\mathbf{C}$ no longer exists. There are two definitions of $\epsilon$-strong secrecy in the literature \cite{yaar14}\cite[Lemma 1]{CsiszarNarayan}: given $\epsilon_1>0$, the first definition requires that
\begin{equation}
\label{def:strong_sec}
I(\mathbf{M};\mathbf{C})\leq\epsilon_1.
\end{equation}
The above equation can be also expressed in terms of KL divergence:
\begin{align*}
D(p_{\mathbf{M}\mathbf{C}}|| p_{\mathbf{M}}p_{\mathbf{C}})\leq\epsilon_1.
\end{align*}
The second definition of strong secrecy requires a bound on the total variation distance (instead of KL divergence). Given some $\epsilon_2>0$, we require
\begin{align}
\lVert p_{\mathbf{M}\mathbf{C}}- p_{\mathbf{M}}p_{\mathbf{C}}\rVert_1\leq\epsilon_2.
\end{align}
\begin{remark}\label{rmk:strong-2def}
(Connection between the two definitions). We claim that strong secrecy in terms of mutual information implies strong secrecy in terms of total variation distance, \emph{i.e.,} $\epsilon_1$ being small implies that $\epsilon_2$ is also small. The reverse is also true if one can show that strong secrecy in terms of total variation distance holds with an exponentially vanishing $\epsilon_2$.  To show this, let us denote the alphabet set of $\mathbf{M}$ by $\mathcal{M}$. It follows from \cite[Lemma 1]{CsiszarNarayan} that if $\epsilon_1$-strong secrecy of the first definition, and $\epsilon_2$-strong secrecy of the second definition hold, then
\begin{align*}
\frac{\log_2 e}{2}\epsilon_2^2\leq\epsilon_1\leq\epsilon_2\log{\frac{\lvert\mathcal{M}\rvert}{\epsilon_2}},
\end{align*}
provided that $\lvert\mathcal{M}\rvert>4$. Hence, if $\epsilon_1$ becomes small, $\epsilon_2$ also becomes small. For the reverse direction, assume that message $M_i$ takes values in $\{1,2,\cdots,2^{nR_i}\}$ where $n$ is the blocklength and $R_i$ is the rate of the $i$-th message. Then $\log\lvert\mathcal{M}\rvert=n\sum_{i}R_i$. If we can ensure that the value of $\epsilon_2$ decreases \emph{exponentially fast} in blocklength $n$, then $n\epsilon_2$ converges to zero as $n$ becomes large, and  $\epsilon_2\log\left({\lvert\mathcal{M}\rvert}/{\epsilon_2}\right)$ will also converge to zero. This will imply that  $\epsilon_1$ vanishes as $n$ tends to infinity. 
\end{remark}

\item[--] Weak secrecy

Similar to strong secrecy, $\mathbf{M}$ and $\mathbf{C}$ are not independent, instead of \eqref{def:perfect_sec} and \eqref{def:strong_sec}, we say that $\epsilon$-weak secrecy holds if:
\begin{equation}
\label{def:weak_sec}
I(\mathbf{M};\mathbf{C})\leq\epsilon\cdot H(\mathbf{M}).
\end{equation}
\end{itemize}

It follows from the above definitions that perfect secrecy condition  \eqref{def:perfect_sec} is stronger than strong secrecy condition \eqref{def:strong_sec}, which in turn is stronger than weak secrecy constraint \eqref{def:weak_sec}.

\section{Main Results}
\label{sec:main_results}

\subsection{Results for linear codes}

\begin{theorem}[From strong secrecy to perfect secrecy for linear codes]
\label{theorem:strong_perfect}
Take an arbitrary linear code $\mathscr{C}$, with adversary observing
$$\mathbf{C}={A}\mathbf{M}+{B}\mathbf{K}+{G}\mathbf{W},$$
as defined in \eqref{eqn5}. If each of the strong secrecy constraints hold for some $\epsilon<1$, \emph{i.e.,} either of 
\begin{equation*}
I(\mathbf{M};\mathbf{C})\leq\epsilon,
\end{equation*}
or
\begin{align*}
\lVert p_{\mathbf{M}\mathbf{C}}- p_{\mathbf{M}}p_{\mathbf{C}}\rVert_1\leq\epsilon.
\end{align*}
hold for some $\epsilon<1/2$, then the code $\mathscr{C}$ is also perfect secure, \emph{i.e.,} $I(\mathbf{M};\mathbf{C})=0$.
\end{theorem}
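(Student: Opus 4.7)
The plan is to exploit the rigidity of linear codes with uniform keys and randomness. First, I will observe that since $\mathbf{K}$ and $\mathbf{W}$ are independent and uniform over their respective $\mathbb{F}$-vector spaces, the vector $\mathbf{Z} := B\mathbf{K}+G\mathbf{W}$ is uniform on the linear subspace $\mathcal{U}:=\mathrm{col}\bigl([B\mid G]\bigr)$. Consequently, $p_{\mathbf{C}\mid\mathbf{M}=\mathbf{m}}$ is the uniform distribution on the affine coset $A\mathbf{m}+\mathcal{U}$. This gives a crucial ``all or nothing'' dichotomy: for any two messages $\mathbf{m},\mathbf{m}'$, the conditional distributions $p_{\mathbf{C}\mid\mathbf{M}=\mathbf{m}}$ and $p_{\mathbf{C}\mid\mathbf{M}=\mathbf{m}'}$ are either identical (when $A(\mathbf{m}-\mathbf{m}')\in\mathcal{U}$) or have disjoint supports (otherwise). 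In particular, perfect secrecy holds if and only if $\mathrm{col}(A)\subseteq\mathcal{U}$.

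Next, I would introduce the equivalence relation $\mathbf{m}\sim\mathbf{m}'$ iff $A(\mathbf{m}-\mathbf{m}')\in\mathcal{U}$, let $T(\mathbf{M})$ denote the equivalence class of $\mathbf{M}$, and let $N$ be the number of classes. Two observations drive the argument: (i) $T(\mathbf{M})$ is a deterministic function of $\mathbf{C}$, since the coset of $\mathcal{U}$ that contains $\mathbf{C}$ is precisely $A\mathbf{M}+\mathcal{U}$; and (ii) the equivalence classes are cosets of a subspace of the message alphabet and hence have equal size, so under the uniform distribution on $\mathbf{M}$ the random variable $T(\mathbf{M})$ is itself uniform on $N$ values, with $H(T(\mathbf{M}))=\log N$. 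Assuming for contradiction that perfect secrecy fails, we obtain $N\geq 2$ and therefore $H(T(\mathbf{M}))\geq 1$.

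The contradictions then follow quickly. For the mutual-information hypothesis, applying data processing to the function $T$ and using (i),
\begin{align*}
I(\mathbf{M};\mathbf{C})\;\geq\; I(T(\mathbf{M});\mathbf{C})\;=\;H(T(\mathbf{M}))-H(T(\mathbf{M})\mid\mathbf{C})\;=\;\log N\;\geq\;1,
\end{align*}
contradicting $I(\mathbf{M};\mathbf{C})\leq\epsilon<1$. For the total-variation hypothesis, a direct case computation using the uniformity of $T(\mathbf{M})$ and the coset structure of $p_{\mathbf{C}\mid T}$ gives $\|p_{T(\mathbf{M}),\mathbf{C}}-p_{T(\mathbf{M})}p_{\mathbf{C}}\|_1 = 1-1/N \geq 1/2$, and the data-processing inequality for total variation applied to the deterministic map $(\mathbf{m},\mathbf{c})\mapsto(T(\mathbf{m}),\mathbf{c})$ lower-bounds $\|p_{\mathbf{M},\mathbf{C}}-p_{\mathbf{M}}p_{\mathbf{C}}\|_1$ by the same quantity, contradicting $\epsilon<1/2$. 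The conceptual heart of the proof is the coset-uniformity dichotomy in the first paragraph; once that is in hand, the MI argument is a one-line data-processing estimate, and the only slightly delicate step is the short case analysis behind the $1-1/N$ evaluation of the total-variation distance.
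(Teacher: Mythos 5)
Your proof is correct and rests on the same structural fact as the paper's: since $\mathbf{K},\mathbf{W}$ are independent and uniform, $B\mathbf{K}+G\mathbf{W}$ is uniform on $\mathcal{U}=\mathrm{col}([B\mid G])$, which gives the one-time-pad coset dichotomy, and both proofs then apply data processing for mutual information and for total variation. The only (minor) difference is in the choice of sufficient statistic: the paper picks a single dual vector $\mathbf{z}$ with $\mathbf{z}^\dagger[B\mid G]=\mathbf{0}$ but $\mathbf{z}^\dagger A\neq\mathbf{0}$, projecting to the scalar $\mathbf{z}^\dagger\mathbf{C}=\mathbf{z}^\dagger A\mathbf{M}$ uniform in $\mathbb{F}$ (yielding $\log|\mathbb{F}|\geq 1$ and $1-1/|\mathbb{F}|\geq 1/2$), whereas you pass to the full quotient map $T(\mathbf{M})=A\mathbf{M}+\mathcal{U}$, which is the dual construction and gives the identical bounds $\log N\geq 1$ and $1-1/N\geq 1/2$ since $N$ is a power of $|\mathbb{F}|$ and hence at least $2$.
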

\begin{proof}
Assume that $I(\mathbf{M};\mathbf{C})>0$ where $\mathbf{C}={A}\mathbf{M}+{B}\mathbf{K}+{G}\mathbf{W}.$ We will show that $I(\mathbf{M};\mathbf{C})\geq 1$ and $\lVert p_{\mathbf{M}\mathbf{C}}- p_{\mathbf{M}}p_{\mathbf{C}}\rVert_1\geq 1/2$. This will conclude the proof.

Assume that $\mathbf{C}$ is a column vector of size $k$. We claim that one can find a non-zero column vector $\mathbf{z}$ of size $k$ such that $\mathbf{z}^\dagger B=\mathbf{z}^\dagger G=\mathbf 0$ are the zero vector, but $\mathbf{z}^\dagger A\neq \mathbf 0$ where $\dagger$ is the transpose operator. If this is not the case, the equation $\mathbf{z}^\dagger[{B},{G}]=\mathbf{0}$ implies that $\mathbf{z}^\dagger[A,{B},{G}]=\mathbf{0}$, showing that the null space $[{B},{G}]^\dagger$ is the same as the null space of  $[A,{B},{G}]^\dagger$. Hence, the rank of the matrix $[{A},{B},{G}]$ is equal to the rank of $[{B},{G}]$. Thus, the image of  the matrix ${A}$ is a subset of the image of $[{B},{G}]$. Let us call the image of $[{B},{G}]$ by $\mathscr{I}$, which is a linear subspace of $\mathbb{F}^k$. Since elements of vectors $\mathbf{K}$ and $\mathbf{W}$ are independently and uniformly distributed over $\mathbb{F}$, ${B}\mathbf{K}+{G}\mathbf{W}$ will be \emph{uniformly distributed} over  $\mathscr{I}$. Just like Shannon's one-time-pad strategy, this will imply that $\mathbf{C}={A}\mathbf{M}+({B}\mathbf{K}+{G}\mathbf{W})$ will be independent of ${A}\mathbf{M}$, and masked by ${B}\mathbf{K}+{G}\mathbf{W}$. To see this, note that for any value of $\mathbf{M}=\mathbf{m}$, we have ${A}\mathbf{m}\in \mathscr{I}$ and the vector 
 $\mathbf{C}={A}\mathbf{m}+{B}\mathbf{K}+{G}\mathbf{W}$ will be 
 uniformly distributed over  $\mathscr{I}$ as well. This is because $\mathscr{I}={A}\mathbf{m}+\mathscr{I}$ since $\mathscr{I}$ is a linear subspace. As a result, the conditional distribution $p(\mathbf{C}|\mathbf{m})$ does not depend on the value of $\mathbf{m}$. Hence, perfect secrecy condition holds. But this contradicts our assumption that $I(\mathbf{M};\mathbf{C})>0$. Thus, we can conclude that there is a non-zero column vector $\mathbf{z}$ of size $k$ such that $\mathbf{z}^\dagger B=\mathbf{z}^\dagger G=\mathbf 0$ are the zero vector, but $\mathbf{z}^\dagger A\neq \mathbf 0$. This implies that  $\mathbf{z}^\dagger C=\mathbf{z}^\dagger A\mathbf{M}\neq \mathbf 0$.

Now, observe that
\begin{align*}I(\mathbf{M};\mathbf{C})&\geq I(\mathbf{M};\mathbf{z}^\dagger\mathbf{C})
\\&=I(\mathbf{M};\mathbf{z}^\dagger A\mathbf{M})
\\&=H(\mathbf{z}^\dagger A\mathbf{M})
\\&\overset{(a)}{=}\log|\mathbb{F}|
\\&\geq 1,\end{align*}
where in $(a)$, we used the fact that $\mathbf{M}$ has uniform distribution, and hence $(\mathbf{z}^\dagger A)\mathbf{M}$ is a uniformly distributed symbol in $\mathbb{F}$. 

Next, defining functions $\hat{m}=f(\mathbf{m})=\mathbf{z}^\dagger A\mathbf{m}$ and $\hat{c}=g(\mathbf{c})=\mathbf{z}^\dagger \mathbf{c}=f(\mathbf{m})$, observe that $\hat{M}=\hat{C}$ is a uniform symbol in $\mathbb{F}$. Then, we can write
\begin{align*}
\lVert p_{\mathbf{M},\mathbf{C}}-p_{\mathbf{M}}\cdot p_{\mathbf{C}}\rVert_1
&\overset{(a)}{\geq} \lVert p_{\hat{M},\hat{C}}-p_{\hat{M}}\cdot p_{\hat{C}}\rVert_1\\
&=\frac 12\sum_{a,b\in\mathbb{F}}{\lvert\mathbbm{1}[a=b]\times\frac{1}{|\mathbb{F}|}-\frac{1}{|\mathbb{F}|^2}\rvert}
\\&=\frac{|\mathbb{F}|(|\mathbb{F}|-1)}{|\mathbb{F}|^2}=\left(1-\frac{1}{|\mathbb{F}|}\right)
\\&\geq \frac 12.
\end{align*}
where $\mathbbm{1}[\cdot]$  is the indicator function, and step $(a)$ follows from the data processing property of total variation distance (see e.g. \cite{PardoVa97}), which states that for any channel $p(y|x)$ we have
$$\|p(x)-q(x)\|_1\geq \|p(y)-q(y)\|_1$$
where $p(y)=\sum_{x}p(x)p(y|x)$ and $q(y)=\sum_x q(x)p(y|x)$. We get our desired inequality if we set the alphabet $\mathcal X$ to be the alphabet of $(\mathbf{M},\mathbf{C})$, $p(x)=p(\mathbf{m},\mathbf{c})$, $q(x)=p(\mathbf{m})p(\mathbf{c})$, and $p(y|x)$ to be the application of  functions $f$ and $g$ applied on the  $\mathbf{M}$ and $\mathbf{C}$ parts of $X$, respectively.

\end{proof}
\begin{theorem}[From $\epsilon$-error to zero-error for linear codes]
\label{theorem:epsilon_zero_error}
Take an arbitrary linear code $\mathscr{C}$ over a finite field $\mathbb{F}$. If the average error probability of a sink node is less than $1-1/|\mathbb{F}|$, then the error probability of the sink node has to be zero.
\end{theorem}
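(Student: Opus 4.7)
The plan is to exploit the fact that in a linear code every node's output, including the decoder's estimate at each sink, is an $\mathbb{F}$-linear combination of the jointly-uniform independent random vector $(\mathbf{M},\mathbf{K},\mathbf{W})$. Consequently the error event at any sink reduces to a linear event on a uniform distribution, and a dichotomy on $\mathbb{F}$-subspaces finishes the proof.

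First, I fix the sink under consideration, let $M$ denote the portion of messages it is supposed to decode (viewed as an element of $\mathbb{F}^d$ for an appropriate $d$), and let $\hat M$ be its linear decoder's output. Because every node in a linear code composes $\mathbb{F}$-linear maps, $\hat M$ is an $\mathbb{F}$-linear function of $(\mathbf{M},\mathbf{K},\mathbf{W})$, and hence so is the error vector $E := \hat M - M$. I write $E = L\,[\mathbf{M}^{\mathrm T},\mathbf{K}^{\mathrm T},\mathbf{W}^{\mathrm T}]^{\mathrm T}$ for an appropriate matrix $L$ over $\mathbb{F}$.

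Next, I invoke the elementary fact that the image of a uniformly distributed vector on $\mathbb{F}^N$ under an $\mathbb{F}$-linear map is uniformly distributed on the image subspace. Applied to $E$, this gives that $E$ is uniform on $\operatorname{image}(L) \subseteq \mathbb{F}^d$. A case split then settles matters. If $\operatorname{image}(L) = \{0\}$, then $E = 0$ deterministically and the sink's error probability is exactly zero. Otherwise $\operatorname{image}(L)$ is a nontrivial $\mathbb{F}$-subspace of $\mathbb{F}^d$ and hence has cardinality at least $|\mathbb{F}|$, so $\Pr(E = 0) = 1/|\operatorname{image}(L)| \le 1/|\mathbb{F}|$, which gives error probability at least $1 - 1/|\mathbb{F}|$. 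Therefore the sink's error probability always lies in $\{0\} \cup [1 - 1/|\mathbb{F}|,\,1]$, and the hypothesis of the theorem rules out the right interval, leaving only zero.

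There is no real obstacle in this argument; the only point worth flagging is that one genuinely needs the decoder to be linear, not merely the intermediate and source encoders, in order to express $E$ as a single matrix applied to $(\mathbf{M},\mathbf{K},\mathbf{W})$. If a non-linear decoder were permitted, a suboptimal deterministic tiebreak could produce an error probability strictly between $0$ and $1 - 1/|\mathbb{F}|$ even when all link operations are linear, so the proof is inseparable from the assumption that the full code $\mathscr{C}$ is linear.
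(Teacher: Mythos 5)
Your proof is correct, but it takes a genuinely different route from the paper's.

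The paper's argument never looks at the decoder at all: it works with the posterior of the desired message $\mathbf{m}_1$ given the sink's post-key-cancellation observation $\mathbf{z}=\mathbf{y}-B_1\mathbf{k}_1$. It shows that all ``compatible'' $\mathbf{m}_1$ (those for which $\mathbf{z}-A_1\mathbf{m}_1$ lies in the image of $[A_2,B_2,G]$) are equally likely a posteriori, and then argues that if \emph{some} observation has two compatible messages, then by the affine shift $\mathbf{m}_1\mapsto\mathbf{m}_1+\alpha\mathbf{m}''_1$, \emph{every} observation has at least $|\mathbb{F}|$ compatible messages, so no decoding rule can be correct with conditional probability exceeding $1/|\mathbb{F}|$. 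Your argument instead fixes the actual (linear) decoder, writes the error vector $E=\hat M - M$ as an $\mathbb{F}$-linear image of the uniform vector $(\mathbf{M},\mathbf{K},\mathbf{W})$, and uses the fact that a linear image of a uniform vector is uniform on a subspace, so $\Pr(E=0)\in\{1\}\cup(0,1/|\mathbb{F}|]$. Both proofs are valid, and each has a distinct virtue: the paper's is phrased as a bound on the success probability of \emph{any} decoding rule, while yours is shorter and entirely self-contained, needing only one linear-algebra fact. There is also a subtlety that your approach resolves more cleanly. The paper's step ``positive error probability implies there exists $\mathbf{z}$ with two distinct compatible messages'' tacitly assumes the decoder never outputs an incompatible estimate; a poorly-chosen linear decoder can violate this. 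Your uniform-image argument covers exactly that case (a linear decoder that outputs incompatible values simply makes $\operatorname{image}(L)$ nontrivial and hence fails with probability at least $1-1/|\mathbb{F}|$), so your proof closes the case the paper's argument skims over. Your final caveat about the necessity of a linear decoder is well-taken and matches the paper's definition of linear network coding, under which ``all the coding operations'' — decoding included — are $\mathbb{F}$-linear.
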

\begin{proof} Consider a sink node. The sink node receives a vector $\mathbf{Y}$ which is a linear combination of messages, keys and private randomness symbols. In other words, we have
\begin{equation*}
\mathbf{Y}={A}\mathbf{M}+{B}\mathbf{K}+{G}\mathbf{W}.
\end{equation*}
for some matrices $A,B$ and $G$. The message vector $\mathbf{M}$ can be split into two part $(\mathbf{M}_1, \mathbf{M}_2)$ where $\mathbf{M}_1$ is the set of messages that the sink nodes wants to decode, and $\mathbf{M}_2$ is the collection of other messages. Similarly, $\mathbf{K}$ can be split into two part $(\mathbf{K}_1, \mathbf{K}_2)$ where $\mathbf{K}_1$ is the set of secret keys that the sink nodes has, and $\mathbf{K}_2$ is the set of secret keys that are not shared with the sink node. Then, we can write
\begin{equation*}
\mathbf{Y}={A_1}\mathbf{M}_1+{A_2}\mathbf{M}_2+{B_1}\mathbf{K}_1+{B_2}\mathbf{K}_2+{G}\mathbf{W}.
\end{equation*}
Since the sink has vector $\mathbf{Y}$ and key $\mathbf{K}_1$, its task is to recover $\mathbf{M}_1$ from 
\begin{equation*}
\mathbf{Y}-{B_1}\mathbf{K}_1={A_1}\mathbf{M}_1+{A_2}\mathbf{M}_2+{B_2}\mathbf{K}_2+{G}\mathbf{W}.
\end{equation*}
Note that the sink node does not know any of $\mathbf{M}_2$, $\mathbf{K}_2$ or $\mathbf{W}$. These three variables $\mathbf{M}_2$, $\mathbf{K}_2$ or $\mathbf{W}$ are mutually independent and uniform over their alphabet sets. 
Let $\mathbf{Z}=\mathbf{Y}-{B_1}\mathbf{K}_1$. Given a value for  $\mathbf{Z}=\mathbf{z}$ for some  $\mathbf{m}_1$, we say that  $(\mathbf{z},\mathbf{m}_1)$ is a compatible pair if the equation
\begin{align}{A_2}\mathbf{m}_2+{B_2}\mathbf{k}_2+{G}\mathbf{w}=\mathbf{z}-{A_1}\mathbf{m}_1\label{eqn:AM20}\end{align}
has a solution in variables $\mathbf{m}_2, \mathbf{k}_2, \mathbf{w}$.

Given a pair $(\mathbf{z},\mathbf{m}_1)$, two possibilities might occur
\begin{itemize}
\item The pair $(\mathbf{z},\mathbf{m}_1)$ are not compatible. In this case, $p(\mathbf{m}_1|\mathbf{z})=0$ and the sink is certain that its intended message is not equal to $\mathbf{m}_1$. 
\item The pair $(\mathbf{z},\mathbf{m}_1)$ are compatible, and the equation 
\begin{align}{A_2}\mathbf{m}_2+{B_2}\mathbf{k}_2+{G}\mathbf{w}=\mathbf{z}-{A_1}\mathbf{m}_1\label{eqn:AM1}\end{align}
has at least one solution for $\mathbf{m}_2, \mathbf{k}_2, \mathbf{w}$. Then, note that
 the number of solutions $(\mathbf{m}_2, \mathbf{k}_2, \mathbf{w})$ that satisfy \eqref{eqn:AM1} is fixed and determined by the dimension of the null space of matrix $[A_2,B_2, G]$. Since $\mathbf{M}_2$, $\mathbf{K}_2$ and $\mathbf{W}$ are mutually independent and uniform, $p(\mathbf{m}_1|\mathbf{z})$ is equal to the number of solutions $(\mathbf{m}_2, \mathbf{k}_2, \mathbf{w})$ of \eqref{eqn:AM1}, divided by the total number of triples $(\mathbf{m}_2, \mathbf{k}_2, \mathbf{w})$. This implies that from the perspective of the sink that has vector $\mathbf{z}$, all the messages $\mathbf{m}_1$ that are compatible with $\mathbf{z}$ are equally likely to have been the transmitted message. 
\end{itemize}
Assume that the sink's error probability is positive. We show that for \emph{any} vector $\mathbf{z}$ that the sink may end up with, there are at least $|\mathbb{F}|$ sequences $\mathbf{m}_1$ that are compatible with $\mathbf{z}$. Thus, the chance of correct decoding will be at most $1/|\mathbb{F}|$. This would complete the proof. 

Now, if the sink's error probability is positive, there exists some vector $\mathbf{z}$
and two distinct compatible sequences $\mathbf{m}'_1\neq \mathbf{m}^*_1$ with it, \emph{i.e.,}  the following two equations have solutions  $(\mathbf{m}_2, \mathbf{k}_2, \mathbf{w})$ and  $(\mathbf{m}_2', \mathbf{k}_2', \mathbf{w}')$:
\begin{align}{A_2}\mathbf{m}^*_2+{B_2}\mathbf{k}^*_2+{G}\mathbf{w}^*&=\mathbf{z}-{A_1}\mathbf{m}^*_1\label{eqn:AM32}
\\
{A_2}\mathbf{m}'_2+{B_2}\mathbf{k}'_2+{G}\mathbf{w}'&=\mathbf{z}-{A_1}\mathbf{m}'_1
\label{eqn:AM3}\end{align}
By subtracting these two equations, we get that for $\mathbf{m}''_1=\mathbf{m}^*_1-\mathbf{m}'_1\neq \mathbf{0}$, the equation
\begin{align}{A_2}\mathbf{m}''_2+{B_2}\mathbf{k}''_2+{G}\mathbf{w}''&=-{A_1}\mathbf{m}''_1\label{eqn:AM243}\end{align}
has a solution  $(\mathbf{m}''_2, \mathbf{k}''_2, \mathbf{w}'')= (\mathbf{m}^*_2, \mathbf{k}^*_2, \mathbf{w}^*)- (\mathbf{m}'_2, \mathbf{k}'_2, \mathbf{w}')$. 

Now take \emph{any} vector $\mathbf{z}$ that the sink may end up with, and let $\mathbf{m}_1$ be the true message sequence that is compatible with $\mathbf{z}$. We claim that $\mathbf{z}$ is also compatible with $\mathbf{m}_1+\alpha \mathbf{m}''_1$ for any $\alpha\in\mathbb{F}$. This follows from multiplying both sides of \eqref{eqn:AM243} by $\alpha$ and then adding it up with \eqref{eqn:AM20}. Since $\mathbf{m}''_1\neq 0$, the sequences 
$\mathbf{m}_1+\alpha \mathbf{m}''_1$ for different values of $\alpha$ are distinct vectors. Since $\alpha$ has $|\mathbb{F}|$ possibilities, this shows that there are at least $|\mathbb{F}|$ sequences $\mathbf{m}_1$ that are compatible with $\mathbf{z}$. 
\end{proof}

\subsection{Result for linear and non-linear codes}
Given message rates $R_{M_i}$, $i=1,2,\dots, t$ and key rates $R_{K_i}$ for $i=1,2,\dots, \Delta$, we say that these message and key rates are asymptotically weakly secure achievable if  there is a sequence of codes $\mathscr{C}_j$ whose message and key rates converge to $R_{M_i}$, $i=1,2,\dots, t$ and $R_{K_i}$ for $i=1,2,\dots, \Delta$ as $j$ tends to infinity, and furthermore, $\mathscr{C}_j$  is $\epsilon_j$-weakly secure, \emph{i.e.,} satisfying
\begin{equation*}
I(\mathbf{M};\mathbf{C})\leq\epsilon_j H(\mathbf{M}),
\end{equation*}
for some vanishing sequence  $\epsilon_j\rightarrow 0$ as $j$ tends to infinity. We say that the given message and key rates are asymptotically weakly secure achievable with linear codes if one can find a  sequence  of linear codes $\mathscr{C}_j$ with the above properties.

We say that message rates $R_{M_i}$, $i=1,2,\dots, t$ and key rates $R_{K_i}$ for $i=1,2,\dots, \Delta$, are asymptotically strongly secure achievable if a similar condition holds except that we require $\mathscr{C}_j$ to be $\epsilon_j$-strongly secure 
\begin{equation*}
I(\mathbf{M};\mathbf{C})\leq\epsilon_j,
\end{equation*} 
for some vanishing sequence  $\epsilon_j$. Asymptotically strongly secure achievable rates with linear codes are defined similarly.

\begin{theorem}[From weak secrecy to strong secrecy for linear and non-linear codes]
\label{theorem:weak_strong}
Any message and key rates $R_{M_i}$ and $R_{K_i}$ that is asymptotically weakly secure achievable, is also asymptotically strongly secure achievable. Also, any message and key rates $R_{M_i}$ and $R_{K_i}$ that is asymptotically weakly secure achievable with linear codes is also asymptotically strongly secure achievable with linear codes.
\end{theorem}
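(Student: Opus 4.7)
The plan is to handle the two statements of the theorem with two distinct techniques: an algebraic subspace-restriction argument in the linear case (which in fact yields perfect, not merely strong, secrecy) and a parallel-composition plus privacy-amplification argument in the general case. Both preserve the given message and key rates in the asymptotic limit.

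For the linear case, I would reuse the setup of Theorem~\ref{theorem:strong_perfect}. With $\mathbf{C}=A\mathbf{M}+B\mathbf{K}+G\mathbf{W}$ and $\mathscr{I}=\mathrm{image}([B\,G])$, the conditional pmf $p(\mathbf{c}\mid\mathbf{m})$ is uniform on the coset $A\mathbf{m}+\mathscr{I}$, so the subspace $\mathscr{N}=\{\mathbf{m}:A\mathbf{m}\in\mathscr{I}\}$ consists exactly of the messages that Eve cannot distinguish from $\mathbf{0}$. A direct computation gives $I(\mathbf{M};\mathbf{C})=H(\mathbf{M})-\log|\mathscr{N}|$, so the weak-secrecy hypothesis $I(\mathbf{M};\mathbf{C})\le\epsilon_j H(\mathbf{M})$ yields $\dim\mathscr{N}\ge(1-\epsilon_j)\,d$, where $d=\dim\mathcal{M}$. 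To retain the independent-per-message structure the paper's definition requires, I would replace $\mathscr{N}$ by the \emph{product} subcode $\mathcal{N}_1\times\cdots\times\mathcal{N}_t$, where $\mathcal{N}_i=\mathscr{N}\cap E_i$ and $E_i=\{0\}^{d_1}\times\cdots\times\mathbb{F}^{d_i}\times\cdots\times\{0\}^{d_t}$. The modular-law inequality $\dim(\mathscr{N}\cap E_i)\ge\dim\mathscr{N}+d_i-d\ge d_i-\epsilon_j d$ shows that each message rate drops by at most $\epsilon_j\sum_{i'}R_{M_{i'}}\to 0$, and since $\mathscr{N}$ is a subspace, any sum of single-coordinate vectors from the $\mathcal{N}_i$ lies in $\mathscr{N}$, so every joint message still lies in $\mathscr{N}$. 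Eve's coset is therefore always $\mathscr{I}$ regardless of $\mathbf{m}$, giving \emph{perfect} secrecy. Decoding is inherited from the original code; vanishing average error on the retained messages follows from a preliminary expurgation/boosting of the original code to drive error probability below the relative size $|\mathbb{F}|^{-\epsilon_j d}$ of the subset.

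For the general case I would use parallel composition followed by privacy amplification. Take $N_j$ independent copies of $\mathscr{C}_j$ on independent message blocks $\mathbf{M}^{(1)},\dots,\mathbf{M}^{(N_j)}$. The composed code has the same per-copy rates, leakage at most $N_j\epsilon_j H(\mathbf{M})$, and therefore $H(\mathbf{M}^{N_j}\mid\mathbf{C}^{N_j})\ge N_j(1-\epsilon_j)H(\mathbf{M})$. AEP-type concentration on the i.i.d.\ product $(\mathbf{M}^{(i)},\mathbf{C}^{(i)})_{i=1}^{N_j}$ converts this Shannon bound into a smooth conditional min-entropy bound with only an $O(\sqrt{N_j})$ smoothing penalty. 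Fixing a suitable $2$-universal hash $h:\prod_i\mathcal{M}\to\mathcal{V}$ with $\log|\mathcal{V}|$ slightly below that smooth min-entropy, I would encode a target message $V$ by drawing a uniform preimage $(\mathbf{m}^{(1)},\dots,\mathbf{m}^{(N_j)})\in h^{-1}(V)$ and feeding each block into $\mathscr{C}_j$; the receiver decodes every block and outputs $V=h(\mathbf{m}^{(1)},\dots,\mathbf{m}^{(N_j)})$. The leftover-hash-lemma bound gives total variation $\|p_{V\mathbf{C}^{N_j}}-p_V p_{\mathbf{C}^{N_j}}\|_1=2^{-\Omega(N_j)}$, which by Remark~\ref{rmk:strong-2def} lifts to $I(V;\mathbf{C}^{N_j})\to 0$, i.e., strong secrecy. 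The rate of $V$ is $R(1-\epsilon_j)-o(1)\to R$, so the same rate tuple is achieved, and key rates are unaffected after normalizing by blocklength. For linear codes one can additionally take $h$ to be a random linear map, keeping the overall construction linear, although the linear case is already covered by the first argument.

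The two technical bottlenecks I expect to spend the most work on are: in the linear case, verifying carefully that the product subspace $\mathcal{N}_1\times\cdots\times\mathcal{N}_t$ really is contained in $\mathscr{N}$ with the claimed dimension (the modular-law computation above is the key one-liner) and controlling the decoder-error on the retained subset by pre-expurgation; in the general case, turning the Shannon-entropy lower bound on $H(\mathbf{M}^{N_j}\mid\mathbf{C}^{N_j})$ into the smooth conditional min-entropy required by the leftover hash lemma. The latter is the main obstacle and is what forces the parallel-composition step: one-shot weak secrecy provides only an average-case guarantee, while privacy amplification needs a (smooth) worst-case guarantee, and taking $N_j$ i.i.d.\ copies is the standard way to close that gap via concentration.
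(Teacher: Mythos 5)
Your general-case sketch (repetition followed by hashing/privacy amplification) is structurally the same framework the paper uses: it repeats the code $n$ times and applies random binning via the OSRB theorem, and the leftover hash lemma and OSRB are two dialects of the same privacy-amplification idea, so the secrecy analysis is not where I disagree. There are, however, two genuine gaps. First, \emph{reliability across repetitions}: the composed code fails if \emph{any} of the $N_j$ blocks is misdecoded, so its error probability is on the order of $N_j\epsilon_b$. The paper fixes a single code $\mathscr{C}$ with parameters $\epsilon_a,\epsilon_b$ and sends $n\to\infty$, so this accumulated error necessarily blows up; you cannot simultaneously make $N_j$ large enough for the smooth min-entropy concentration you need and small enough that $N_j\epsilon_b\to 0$ without an unwarranted assumption relating $\epsilon_a$ and $\epsilon_b$. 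The paper's resolution is to bin each $M_i([n])$ into \emph{two} indices: a long one $\widetilde M_i$ (the new message) and a short Slepian--Wolf hint $F_i$ of vanishing rate, which is shipped over the network so that the sinks can recover $M_i([n])$ from $F_i$ together with the $n$ noisy reconstructions $\hat M_{ij}([n])$ with exponentially small error. Without something like $F_i$ your decoder has nothing to work with. Second, \emph{distributed encoding}: your hash $h:\prod_i\mathcal M\to\mathcal V$ acts on the joint message, which no single node holds, and the uniform preimage of $V$ must be produced \emph{consistently} by every source node that has $\widetilde M_i$. The paper hashes each $M_i([n])$ separately to preserve the per-message, per-node structure and gives the sources of message $i$ a small shared seed $G_i$ of negligible rate to simulate $p_{M_i([n])\mid\widetilde M_i}$ consistently. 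Your sketch omits both pieces, and both are essential, not bookkeeping.

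Your linear-case argument is a genuinely different and attractive route that the paper does not take for this theorem. Restricting the message alphabet to the subspace $\mathscr N=\{\mathbf m:A\mathbf m\in\mathscr I\}$ gives perfect secrecy outright, and the modular-law bound $\dim(\mathscr N\cap E_i)\ge d_i-\epsilon_j d$ correctly controls the per-message rate loss as $\epsilon_j\to 0$, so this would prove the (stronger) weak-to-perfect implication for linear codes in one step, whereas the paper only reaches that conclusion by chaining Theorems~\ref{theorem:strong_perfect}, \ref{theorem:epsilon_zero_error}, and \ref{theorem:weak_strong}. The one step that does not hold up as written is decoding: the original code's error averaged over the exponentially smaller subspace $\mathcal N_1\times\cdots\times\mathcal N_t$ could be vastly larger than $\epsilon_b$, since all the error mass could live there, and there is no license to ``expurgate/boost'' a code that is handed to you in a network setting. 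The clean fix is the paper's own Theorem~\ref{theorem:epsilon_zero_error}: any linear wireline code with average error below $1-1/|\mathbb F|$ already has \emph{zero} error, so along the sequence $\mathscr C_j$ the restriction eventually costs nothing on the reliability side. With that substitution your linear argument closes and is a nice shortcut; without it (or the general-case machinery) the decoding step is a real gap.
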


In order to prove the above theorem, we need tools from random binning of sources that are given in Appendix \ref{appendixA}.


\begin{proof}[Proof of Theorem \ref{theorem:weak_strong}] We begin by providing the high level structure of the proof.

\textbf{High level structure of the proof:}
Suppose we have a code $\mathscr{C}$ satisfying the weak secrecy condition with parameter $\epsilon_a$, \emph{i.e.,} \begin{align}I(\mathbf{M};\mathbf{C})\leq\epsilon_a\cdot H(\mathbf{M}). \label{eqn:weaksecrecy}\end{align} Also assume that the error probability of the code is $\epsilon_b$. Then, we construct a  sequence of strongly-secure codes $\mathscr{C}'_n$ whose information leakage vanishes as $n$ tends to infinity. The message rates of $\mathscr{C}'_n$ converge to a number that is at least $R_{M_i}-\beta$, and the key rates of $\mathscr{C}'_n$ converge to a number that is at most $R_{K_i}+\beta$. Here $\beta$ is a constant that depends only on $\epsilon_a$ and $\epsilon_b$. Furthermore, $\beta$ converges to zero as $\epsilon_a$ and $\epsilon_b$ converge to zero. Constructing this sequence of strongly secure codes completes the proof. This sequence of codes is constructed by repeating the original code $\mathscr{C}$ and properly appending the repeated code.

\textbf{Some definitions:}
Assume that there are $u$ sink nodes and message $M_i$ is desired by sinks $\mathcal{T}_i\subseteq [u]$. Let us denote by $\hat{{M}}_{ij}$ to be the reconstruction of $M_i$ by sink $j\in\mathcal{T}_i$. Since the error probability of the code $\mathscr{C}$ is $\epsilon_b$, By Fano's inequality, we have \begin{align}H(M_i|\hat{{M}}_{ij})\leq h(\epsilon_b)+\epsilon_b\log|{\mathcal{M}_i}|, \qquad \forall j\in \mathcal{T}_i. \label{eqndefdeltai}\end{align} Let $\delta_i=h(\epsilon_b)+\epsilon_b\log|{\mathcal{M}_i}|$, and \begin{align}\delta=\max_{i\in[t]}\delta_i.\label{defdelta}\end{align}
If we fix the coding operations at all nodes, the output reconstructions and  eavesdropper's information will be functions of  the message $\mathbf{M}$, secret key $\mathbf{K}$ and private randomness $ \mathbf{W}$: 
$$(\mathbf{\hat{M}}, \mathbf{C})=g(\mathbf{M}, \mathbf{K},  \mathbf{W}).$$

\textbf{Independent repetitions of the code $\mathscr{C}$:}
Assume that we independently run the above code $n$ times. In other words, instead of considering one copy of message $M_i$, assume that $n$ i.i.d.~copies $M_{i}(1), M_{i}(2), \cdots, M_{i}(n)$ exist for $i\in [t]$.  For each of the $n$ copies of the messages, we run the given code and the sinks produce reconstructions $\hat{{M}}_{ij}(1), \hat{{M}}_{ij}(2), \cdots, \hat{{M}}_{ij}(n)$ for $i\in[t], j\in\mathcal{T}_i$. We call this expansion $n$ i.i.d.~repetitions of the code and denote it by $\mathscr{C}^n$. Observe that the rate of the expanded code $\mathscr{C}^n$ is equal to the rate of the original code $\mathscr{C}$, because even though the links in the network are used $n$ times a single code, but the message communicated over the network is also multiplied by $n$. Similarly, the rates of secret keys shared among the network nodes remain unchanged. By summing up the weak secrecy conditions $I(\mathbf{M}(i);\mathbf{C}(i))\leq\epsilon_a\cdot H(\mathbf{M}(i))$ for each repetition of the code, we obtain that 
$$I(\mathbf{M}([n]);\mathbf{C}([n]))\leq\epsilon_a\cdot H(\mathbf{M}([n])),$$
where $\mathbf{M}([n])=\{\mathbf{M}(1), \mathbf{M}(2), \dots, \mathbf{M}(n)\}$
is the collection of all messages of $\mathscr{C}^n$. We see that the weak secrecy condition holds with the same parameter $\epsilon_a$ for $\mathscr{C}^n$. However, the error probability of the expanded code $\mathscr{C}^n$ is higher, because $\mathscr{C}^n$ will be in error if an error occurs in \emph{any} of the $n$ iterations of the code. Nonetheless, by properly appending the expanded space provided by $\mathscr{C}^n$, we not only bring down the error probability, but also go  from weak secrecy to strong secrecy at the cost of sacrificing an asymptotically vanishing reduction in message rates.

We can represent the expanded code $\mathscr{C}^n$ by i.i.d.\ variables $(\mathbf{\hat{M}}(i),\mathbf{C}(i),\mathbf{M}(i),\mathbf{K}(i), \mathbf{W}(i))$ for $i\in [n]$, and follows that 
$$(\mathbf{\hat{M}}(i), \mathbf{C}(i))=g(\mathbf{M}(i), \mathbf{K}(i),  \mathbf{W}(i)).$$

\textbf{Informal sketch of the proof:}
Since the formal proof involves several technical details that might clutter the flow of ideas, 
we begin by the informal sketch of the proof to convey the essential ideas.  The formal proof is given afterwards. Below, we use the term ``small" informally to mainly denote a term that vanishes as $\epsilon_a$ and $\epsilon_b$ converge to zero. 

Via a binning argument, we find two appropriate functions of $M_{i}([n])$, namely $\widetilde{M}_i$ and $F_i$ for $i\in [t]$, such that 
\begin{itemize}
\item[i)] The alphabet size of variable $F_i$ is small for any $i\in[t]$. 
\item[ii)] Random variable $\widetilde{M}_i$ is almost uniformly distributed. Since $\widetilde{M}_i$ is a function of $M_{i}([n])$, multiple sequences $m_{i}([n])$ may be mapped to the same $\widetilde{m}_i$. We construct the function $\widetilde{M}_i$ such that the number of $M_{i}([n])$ that are mapped to each realization of $\widetilde{M}_i$ is small. Hence $\widetilde{M}_i$ is in an approximate one-to-one map with $M_{i}([n])$, and the entropy of random variable $\widetilde{M}_i$ is almost equal to the entropy of $M_{i}([n])$ for any $i\in[t]$. 
\item[iii)] Let us use $\widetilde{\mathbf{M}}$ and ${\mathbf{F}}$ to denote the collection of $\widetilde{M}_i$ and $F_i$ for $i\in [t]$, respectively. Let $\widetilde{\mathbf{C}}=(\mathbf{C}([n]), \mathbf{F})$. Then, $\widetilde{\mathbf{M}}$ and $\widetilde{\mathbf{C}}$ are almost mutually independent. In particular, there is some $\eta>0$ such that
\begin{align}
\lVert p_{\widetilde{\mathbf{M}}\widetilde{\mathbf{C}}}- p_{\widetilde{\mathbf{M}}}p_{\widetilde{\mathbf{C}}}\rVert_1\leq 2^{-\eta n}.
\end{align}
\item[iv)] Given $j\in \mathcal{T}_i$, as mentioned above, error probability $\mathbb{P}(M_i([n])\neq \hat{M}_{ij}([n]))$ can become large when $n$ becomes large; this is because the expanded code will be in error if an error occurs in \emph{any} of the $n$ iterations of the code. However, variable  $F_i$ is such that one can recover $M_i([n])$ from $F_i$ and reconstruction $\hat{M}_{ij}([n])$ for any $j\in \mathcal{T}_i$ with very high probability. In other words, once given $F_i$, it is possible to use the $n$ noisy reconstructions $\hat{M}_{ij}([n])$ to recover $M_i([n])$ with high probability. Thus, providing the additional variable $F_i$ to the receivers will be used to resolve the error probability issue.
\end{itemize}
Now, we show that how finding  $\widetilde{M}_i$ and $F_i$ with the above properties completes the proof. Since $M_{i}([n])$ are mutually independent for $i\in[t]$, we have that $\widetilde{M}_i$'s are also mutually independent for $i\in[t]$. We view $\widetilde{M}_i$ as the messages for the new code $\widetilde{\mathscr{C}}$ that we construct. Thus, each source node that was receiving message $M_i$, is now receiving $\widetilde{M}_i$ as the $i$-th message. But to be able to exploit the original expanded code $\mathscr{C}^n$, we need to create $M_i([n])$ from $\widetilde{M}_i$. To do this, we consider the channel $p_{M_i([n])|\widetilde{M}_i}$, and pass $\widetilde{M}_i$ through this channel  to simulate $M_i([n])$. Since $M_i([n])$ is uniformly distributed, this simulation is nothing but looking at sequences $M_i([n])$ that are mapped to the same $\widetilde{M}_i$, and choosing uniformly at random from them. 
Since the $i$-th message $\widetilde{M}_i$ may be available at multiple source nodes, we should make sure that they all create the same  $M_i([n])$. To do this,  we assume an additional common key is shared among the source nodes to make this coordination. The rate of this extra key will be shown to vanish in the limit by using property (ii). The source nodes can use this shared key to simulate the channel $p_{M_i([n])|\widetilde{M}_i}$, and pass $\widetilde{M}_i$ to obtain one common copy of $M_i([n])$. Having simulated $M_i([n])$, the nodes can find $F_i$ (which is a function of $M_i([n])$). 

Once $M_i([n])$'s are simulated, we can use the encoding and decoding operations of $\mathscr{C}^n$. This allows the sinks to produce reconstructions $\hat{{M}}_{ij}(1), \hat{{M}}_{ij}(2), \cdots, \hat{{M}}_{ij}(n)$. 
Next, $F_i$'s are also sent from source nodes to sink nodes via the network links. Since the entropy rates of $F_i$'s are vanishing, we do not violate the link capacities asymptotically. From property (iv) given above, this will allow sinks to decode their intended messages with vanishing error probability. From property (iii) given above, strong secrecy condition (total variation distance definition) holds even if eavesdropper also gets to learn $F_i$'s in addition to $\mathbf{C}([n])$. Since the total variation distance drops exponentially fast in $n$, from Remark \ref{rmk:strong-2def}, we get strong secrecy condition in the sense of vanishing mutual information. This will complete the proof.

\textbf{Formal proof:}

\textbf{Step 1: Construction of $\widetilde{M}_i$ and $F_i$ for $i\in [t]$:}

Let $R_i=\log|\mathcal M_i|$. This quantity is proportional to $R_{M_i}$ of code $\mathscr{C}$. In fact, if code  $\mathscr{C}$ consists of $k$ uses of the network, then $R_{M_i}=R_i/k$ is the message sent per network use. 
 Let
\begin{align}\tilde{R}_i&=R_i-2\epsilon_a\cdot H(\mathbf{M})-2\delta\label{eqnRtudfi},
\\R_{F_i}&=2\delta\label{eqnRFi},
\end{align}
where $\delta$ was defined in \eqref{defdelta}. 

Observe that the repetitions of message $M_i$, \emph{i.e.,} $M_{i}([n])$ has alphabet set $\mathcal{M}_i^n$. We consider two independent binnings of $\mathcal{M}_i^n$, one  into $2^{n\tilde{R}_i}$ bins and another into $2^{nR_{F_i}}$ bins. These binnings are done randomly and independently. Applying the (random) binning mapping to $M_{i}([n])$, let us denote the bin indices by $\widetilde{M}_i$ and $F_i$, respectively.  The binning mappings can be linear or non-linear depending on whether we are proving the theorem for linear or non-linear case. 

According to Theorem \ref{thm:errorvanishfast} given in the appendix, if for any $\mathcal{S}\subseteq [t]$, the binning rate vector $$(\tilde{R}_1, {R}_{F_1},\tilde{R}_2, {R}_{F_2},\cdots,\tilde{R}_t, {R}_{F_t})$$ satisfies the following inequality,
\begin{align}
\label{OSRB_Rate_Condition}
\sum_{i\in\mathcal{S}}{\tilde{R}_i+R_{F_i}}< H({M}_{\mathcal{S}}|\mathbf{C})&=H({M}_{\mathcal{S}})-I({M}_{\mathcal{S}};\mathbf{C})\nonumber\\
&=\sum_{i\in\mathcal{S}}{R_i}-I({M}_{\mathcal{S}};\mathbf{C}),
\end{align}
then, one can find $\kappa>0$ such that  for sufficiently large enough $n$
\begin{equation}
\label{eq:total_var_strong_sec}
\mathbb{E}\lVert P_{\widetilde{\mathbf{M}}{\mathbf{F}}\mathbf{C}([n])}-p_{\widetilde{\mathbf{M}}}^Up_{{\mathbf{F}}}^U p_{\mathbf{C}([n])}\rVert_1\leq 2^{-\kappa n}
\end{equation}
where the expected value is over all random binning mappings and $p^U$ is the uniform distribution. Observe that  \eqref{OSRB_Rate_Condition} holds by the choice of $\tilde R_i$ and $R_{F_i}$ given in \eqref{eqnRtudfi} and \eqref{eqnRFi}. The reason is that
\begin{align*}
\sum_{i\in\mathcal{S}}{\tilde{R}_i+R_{F_i}}&=\left(\sum_{i\in\mathcal{S}}R_i\right)-2\epsilon_a|\mathcal S|\cdot H(\mathbf{M})\\
&\overset{(a)}{\leq} \left(\sum_{i\in\mathcal{S}}R_i\right)-\epsilon_a|\mathcal S|\cdot H(\mathbf{M})-|\mathcal S|\cdot  I(\mathbf{M};\mathbf{C})\\
&\leq \left(\sum_{i\in\mathcal{S}}R_i\right)-\epsilon_a|\mathcal S|\cdot H(\mathbf{M})-  I({M}_{\mathcal{S}};\mathbf{C})\\
&< \left(\sum_{i\in\mathcal{S}}R_i\right)-  I({M}_{\mathcal{S}};\mathbf{C}) \label{eqn:tofollowfromf},
\end{align*}
where $(a)$ follows from \eqref{eqn:weaksecrecy}.

Next, we want to define some Slepian-Wolf decoders. Csisz\'ar in \cite[Theorem 1,3]{csiszar82} proves the existence of error exponents for the the Slepian-Wolf theorem \cite{SlepWo73} for random non-linear and linear binning. This result implies that we can recover $M_i([n])$ from bin index $F_i$ and side information $\hat{M}_{ij}([n])$ for any $j\in \mathcal{T}_i$ with error probability of at most $2^{-n\beta_i}$ for some $\beta_i>0$  if 
$$R_{F_i}>H(M_i|\hat{M}_{ij}),$$
and $n$ is sufficiently large.
Note that the probability of success of the Slepian-Wolf decoder is with respect to random binning (computed by taking the statistical average over all random binnings). Observe that $R_{F_i}$ given in \eqref{eqnRFi} satisfies this inequality because of \eqref{eqndefdeltai} and \eqref{defdelta}.

Let 
\begin{align}
R_{G_i}&=2\epsilon_a\cdot H(\mathbf{M})+3\delta.
\end{align}
Because $R_{G_i}+\tilde{R}_i>H(M_i)$, by Theorem \ref{thm:thm7}, one can simulate the channel
$p_{M_i([n])|\widetilde{M}_i}$ using randomness of rate $R_{G_i}$  within an average total variation distance of at most $2^{-n\zeta_i}$ for some $\zeta_i>0$.

We claim that there is a \emph{deterministic} binning such that for some $\eta>0$, 
\begin{itemize}
\item (i) We have
\begin{equation}
\label{eq:total_var_strong_sec22n2}
\lVert p_{\widetilde{\mathbf{M}}{\mathbf{F}}\mathbf{C}([n])}-p_{\widetilde{\mathbf{M}}}^Up_{{\mathbf{F}}}^U p_{\mathbf{C}([n])}\rVert_1\leq 2^{-\eta n}.
\end{equation}
\item (ii) For any $i$, with probability $1-2^{-\eta n}$, one can recover  $M_i([n])$ from bin index $F_i$ and side information $\hat{M}_{ij}([n])$ for any $j\in \mathcal{T}_i$.
\item (iii) For any $i$, one can simulate the channel
$p_{M_i([n])|\widetilde{M}_i}$ using randomness of rate $R_{G_i}$  within a total variation distance of at most $2^{-n\eta}$.
\end{itemize}
 The reason is that we know the average of the sum of the  total variation distance of \eqref{eq:total_var_strong_sec22n2}, plus the error probabilities of the Slepian-Wolf decoders, plus the total variation distance of the channel simulator converges to zero (exponentially fast) over all random instances. Hence, there must exist a deterministic binning (a fixing of binnings) that makes this total sum converge to zero (exponentially fast). 

\textbf{Step 2: Completing the proof using $\widetilde{M}_i$ and $F_i$ for $i\in [t]$:}

We construct a new code $\widetilde{\mathscr{C}}$ as follows: the $i$-th message is denoted by $\widetilde{M}_i$ and is uniformly distributed over a set of size $2^{n\tilde{R}_i}$. The nodes of the network also have shared keys of the same length as they have in $\mathscr{C}^n$. Additionally, the source nodes who obtain the $i$-th message $\widetilde{M}_i$, are assumed to share a common secret key of rate $R_{G_i}$. This secret key is used by them to simulate the same channel $p_{M_i([n])|\widetilde{M}_i}$. The source nodes pass their messages  $\widetilde{M}_i$ through this channel to produce $M_i([n])$. Having produced $M_i([n])$, the nodes can find $F_i$ (which is a function of $M_i([n])$). Furthermore, with their simulated $M_i([n])$, we can use the encoding and decoding operations of $\mathscr{C}^n$. This gives the adversary random variable $\mathbf{C}([n])$. Furthermore, the source nodes send variables $F_i$ through the network links. This comes at a negligible additional cost since $R_{F_i}$ can be made arbitrarily small. This gives the adversary random variables $\mathbf{C}([n])$ and $\mathbf{F}$.

\emph{Secrecy and reliability analysis:} Observe that the induced pmf on  $\widetilde{M}_i, M_i([n])$ and $F_i$ is as follows:
$$p_{\widetilde{\mathbf{M}}}^U~\cdot~\tilde{p}_{\mathbf{M}([n])~\mathbf{|}~\widetilde{\mathbf{M}}}~\cdot~p_{\mathbf{F},\mathbf{C}([n])~|~\mathbf{M}([n])}$$
Since by \eqref{eq:total_var_strong_sec22n2}, $$\|p_{\widetilde{\mathbf{M}}}^U-p_{\widetilde{\mathbf{M}}}\|_1\leq 2^{-\eta n}$$ and by (iii), $$\|p_{\widetilde{\mathbf{M}}}\tilde{p}_{\mathbf{M}([n])|\widetilde{\mathbf{M}}}-p_{\widetilde{\mathbf{M}}}p_{\mathbf{M}([n])|\widetilde{\mathbf{M}}}\|_1\leq  2^{-\eta n}$$
using \cite[Lemma 3, part 3]{yaar14}, we get that
\begin{align}\|p_{\widetilde{\mathbf{M}}}^U~\cdot~\tilde{p}_{\mathbf{M}([n])|\widetilde{\mathbf{M}}}~\cdot~p_{\mathbf{F},\mathbf{C}([n])|\mathbf{M}([n])}~-~ p_{\widetilde{\mathbf{M}}}~\cdot~p_{\mathbf{M}([n])|\widetilde{\mathbf{M}}}~\cdot~p_{\mathbf{F},\mathbf{C}([n])|\mathbf{M}([n])}\|_1\leq 2\times 2^{-\eta n}.\label{eqnLkhgdlk}\end{align}
Hence, the induced pmf of the  code $\widetilde{\mathscr{C}}$ is very close to the induced pmf of  $\mathscr{C}^n$ with $\widetilde{M}_i$ and $F_i$ created as deterministic bin indices of $M_i([n])$. 
From \eqref{eq:total_var_strong_sec22n2}, we can then conclude that in the new code $\widetilde{\mathscr{C}}$, the message vector $\widetilde{\mathbf{M}}$ is almost independent of $\mathbf{F},\mathbf{C}([n])$. Since the
strong secrecy condition (total variation distance definition) holds with the total variation distance dropping exponentially fast in $n$, from Remark \ref{rmk:strong-2def}, we get strong secrecy condition in the sense of vanishing mutual information between  $\widetilde{\mathbf{M}}$ and $\mathbf{F},\mathbf{C}([n])$.

The sink nodes use the encoding and decoding operations of $\mathscr{C}^n$. This allows the sinks to produce reconstructions $\hat{{M}}_{ij}(1), \hat{{M}}_{ij}(2), \cdots, \hat{{M}}_{ij}(n)$. Since  $F_i$'s are also sent from source nodes to sink nodes via the network links, from property (ii) given above, the sinks can decode their intended messages with vanishing error probability. This  completes the proof.

\end{proof}

\section{Conclusion}
\label{Conclusion_Sec}

In this paper, we considered a setup which contains $t$ transmitter, $u$ receivers and some intermediate nodes being connected with directed error-free point-to-point links. It is also assumed that there exists an eavesdropper being able to hear a certain subset of links. In order to provide secrecy, each node has access to some keys and private randomness. Defining different conditions on decoding error and secrecy, \emph{i.e.,} zero and $\epsilon$-error decoding; and weak, strong and perfect secrecy constraints, we were seeking to find a relation between rate regions considering different conditions. In Theorem \ref{theorem:strong_perfect}, we showed that for the linear case the rate region with strongly-secure condition is equivalent to one with perfectly-secure constraint. Theorem \ref{theorem:epsilon_zero_error} states the equivalency of $\epsilon$-error to zero-error rate region for the linear case. Moreover, we showed in Theorem \ref{theorem:weak_strong} for general case (both linear and non-linear regime) that relaxing the secrecy condition from strong to weak secrecy, does not change the rate region when we have an $\epsilon$-error decoding condition. Our conjecture is that the $\epsilon$-error weakly-secure rate region is equivalent to zero-error perfectly-secure one in the general case.

\section*{Acknowledgement}
The authors would like to thank Mohammad Hossein Yassaee for his helpful comments.


\bibliographystyle{IEEEtran}

\appendix
\section{Tools from random binning}\label{appendixA}

\subsection{Some Definitions}

\underline{Random binning:} In random binning, each realization of a random variable is randomly mapped to a bin index. Therefore, random binning is a random function like $\mathfrak{B}:\mathcal{M}\rightarrow\bar{\mathcal{M}}$ which uniformly and independently maps each symbol $m\in\mathcal{M}$ to a symbol $\bar{m}\in\bar{\mathcal{M}}$. In other words, $B=\mathfrak{B}(m)$ is a uniform random variable on the set $\{0,1,\cdots,\lvert \bar{\mathcal{M}}\rvert-1\}$ and for any $m_1\neq m_2\in\mathcal{M}$, $B_1=\mathfrak{B}(m_1)$ is independent of $B_2=\mathfrak{B}(m_2)$.

\underline{Linear random binning:} In linear random binning, the mapping function $\mathfrak{B}$ is linear. Each (affine) linear random binning has a matrix representation of the form $\bar{M}={A}M+V$, where $A$ is a random matrix, and $V$ is a random vector, all with independent and uniform entries in $\mathbb{F}$. Consider $M$ as a sequence of symbols in the finite field $\mathbb{F}$ with the length of $\ell_m$ and bin index $\bar{M}$ as a sequence of length $\ell_{\bar{m}}$ in $\mathbb{F}$, linear random binning matrix will be of size ${A}_{\ell_{\bar{m}}\times \ell_m}$ and $V$ will be of length $\ell_{\bar{m}}$. 

\underline{Distributed random binning:} 

In distributed random binning, there are a set of random functions $\mathfrak{B}_i:\mathcal{M}_i\rightarrow\bar{\mathcal{M}}_i,~i\in[t]$ where each $\mathfrak{B}_i$ is a random binning function and $\mathfrak{B}_i$'s are mutually independent. Distributed linear random binning can be characterized by matrices $A_i$ and drift terms $V_i$, $\bar{M}_i={A_i}M_i+V_i$ where entries of all of $A_i$ and $V_i$ are mutually independent and uniform over $\mathbb{F}$. Observe that the following facts holds in a distributed linear binning: (i) uniformity property: for any values of $m_i$ and $\bar{m}_i$, we have
\begin{align}
\mathbb{P}\left({A}_i m_i+V_i=\bar{m}_i\right)=\frac{1}{\lvert \bar{\mathcal{M}}_i\rvert},\label{uniformity}
\end{align}
and (ii) pairwise independence property: for any values of $m_i, m_j, \bar{m}_i$ and $\bar{m}_j$, we have
\begin{align}
\mathbb{P}({A}_i m_i+V_i=\bar{m}_i,{A}_j m_j+V_j=\bar{m}_j)=\frac{1}{\lvert \bar{\mathcal{M}}_i\rvert^2}.\label{uniformity-pair}
\end{align}

\subsection{Output Statistics of Random Binning}

Output Statistics of Random Binning (OSRB) is a tool introduced in \cite{yaar14} to describe the joint pmf of bin indices of multiple random variables.

\begin{theorem}[OSRB Theorem - Theorem~1 in \cite{yaar14}]\label{thm:OSRBa0}
Consider dependent random variables $(M_1,M_2,\cdots,M_t,C)$ with joint pmf $p(m_1,m_2,\cdots,m_t,c)$ on the finite alphabet set $\prod_{i=1}^{t}{\mathcal{M}_i}\times\mathcal{C}$. 
Let $\mathbf{M}^n, C^n$ be $n$ i.i.d.\ repetitions of $(\mathbf{M}, C)$ where $\mathbf{M}=(M_1, M_2, \cdots, M_t)$, \emph{i.e.,}
$$p(\mathbf{m}^n,c^n)=\prod_{i=1}^np(\mathbf{m}_i, c_i).$$
 Moreover, we assume that distributed random binning function $\mathfrak{B}_i:\mathcal{M}_i^n\rightarrow\bar{\mathcal{M}_i}=[2^{nR_i}],~i\in[t]$ maps each sequence of $\mathcal{M}_i^n$ independently and uniformly to the bin index set $[2^{nR_i}]$ that induces the following pmf
\begin{equation*}
P(\mathbf{m}^n,c^n,\bar{\mathbf{m}})=p(\mathbf{m}^n,c^n)\cdot\prod_{i=1}^{t}{\mathbbm{1}\left[\mathfrak{B}_i(m_i^n)=\bar{m}_i\right]}.
\end{equation*}
where $\mathbf{m}^n=\{m_i^n,~i\in[t]\}$ and $\bar{\mathbf{m}}=\{\bar{m}_i\in[2^{nR_i}],~i\in[t]\}$. Note that $P(\mathbf{m}^n,c^n,\bar{\mathbf{m}})$ shown by capital letter is a random pmf which is equal to $p_{\mathbf{m}^n,c^n|\mathfrak{B}_1,\mathfrak{B}_2,\cdots,\mathfrak{B}_t}$ for each fixed binning. According to the OSRB theorem, if for each $\mathcal{S}\subseteq [t]$, the binning rate vector $\left(R_1,R_2,\cdots,R_t\right)$ satisfies the inequality,
\begin{equation*}
\sum_{i\in\mathcal{S}}{R_i}< H(\mathbf{M}_{\mathcal{S}}|C),
\end{equation*}
the expected value of the total variation of the joint pmf $P(c^n,\bar{\mathbf{m}})$ from the $p_{c^n}\prod_{i=1}^{t}p_{[2^{nR_i}]}^U$ tends to zero as $n$ approaches infinity:
\begin{equation}
\lim_{n\rightarrow\infty}\mathbb{E}_{\mathfrak{B}}\lVert P(c^n,\bar{\mathbf{m}})-p_{c^n}\prod_{i=1}^{t}p_{[2^{nR_i}]}^U\rVert_1\rightarrow 0,\label{eqn:osrb-main-thm}
\end{equation}
In the above equation, $\mathfrak{B}=\{\mathfrak{B}_i,~i\in[t]\}$ is the set of all random functions and $p_{[2^{nR_i}]}^U$ refers to the uniform distribution on the bin index set $[2^{nR_i}]$. The expectation in \eqref{eqn:osrb-main-thm} is take over random realization of the binning mappings.
\end{theorem}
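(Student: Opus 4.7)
I would begin with a weakly secure code $\mathscr{C}$ of leakage parameter $\epsilon_a$ and decoding error $\epsilon_b$ and tensorise: run $n$ i.i.d.\ copies to obtain $\mathscr{C}^n$. The weak-secrecy condition tensorises exactly, giving $I(\mathbf{M}([n]);\mathbf{C}([n]))\le\epsilon_a H(\mathbf{M}([n]))$ for $\mathscr{C}^n$, while Fano's inequality gives $H(M_i\mid\widehat{M}_{ij})\le\delta:=h(\epsilon_b)+\epsilon_b\log|\mathcal{M}_i|$. The plan is then to layer a random-binning construction on top of $\mathscr{C}^n$ that does two things at once: compress each message $M_i([n])$ into a bin index $\widetilde{M}_i$ that is nearly uniform and nearly independent of the eavesdropper's view (upgrading secrecy from weak to strong), and attach a small side-index $F_i$ whose rate is just large enough to let each sink clean up, in the Slepian--Wolf sense, the amplified error of running $\mathscr{C}$ $n$ times.

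\textbf{Binning and rate choice.} For each message I would choose two independent random binnings of $M_i([n])$ into $2^{n\widetilde{R}_i}$ and $2^{nR_{F_i}}$ bins with
\[
\widetilde{R}_i=R_i-2\epsilon_a H(\mathbf{M})-2\delta,\qquad R_{F_i}=2\delta,
\]
so that for every $\mathcal{S}\subseteq[t]$, $\sum_{i\in\mathcal{S}}(\widetilde{R}_i+R_{F_i})<H(M_{\mathcal{S}}\mid\mathbf{C})$. This is exactly the OSRB hypothesis of Theorem~\ref{thm:OSRBa0}; using its exponential-rate strengthening from \cite{yaar14} one gets
\[
\mathbb{E}\lVert P_{\widetilde{\mathbf{M}}\mathbf{F}\mathbf{C}([n])}-p^U_{\widetilde{\mathbf{M}}}p^U_{\mathbf{F}}p_{\mathbf{C}([n])}\rVert_1\le 2^{-\kappa n}.
\]
Because $R_{F_i}>H(M_i\mid\widehat{M}_{ij})$, Csisz\'ar's error-exponent strengthening of Slepian--Wolf \cite{csiszar82}, which works for both random and random \emph{linear} binning, provides a decoder recovering $M_i([n])$ from $(F_i,\widehat{M}_{ij}([n]))$ with exponentially small expected error. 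A third random-binning ingredient, a channel simulator from \cite{yaar14}, lets source nodes sharing $\widetilde{M}_i$ use a common seed of rate $R_{G_i}=2\epsilon_a H(\mathbf{M})+3\delta$ to jointly simulate one copy of $M_i([n])\sim p_{M_i([n])\mid\widetilde{M}_i}$ within exponentially small total variation. Averaging over all three random ingredients, a single deterministic binning exists for which all three exponential bounds hold simultaneously.

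\textbf{Assembly and main obstacle.} With that deterministic binning fixed, I build $\widetilde{\mathscr{C}}$: the new $i$-th message is $\widetilde{M}_i$; the source nodes use the simulator to synthesise a common copy of $M_i([n])$, compute $F_i$ from it, feed $M_i([n])$ into $\mathscr{C}^n$'s encoders, and additionally transmit $F_i$ across the network (a negligible extra load since $R_{F_i}=O(\delta)$). A two-step triangle-inequality argument in the spirit of \cite[Lemma~3]{yaar14} shows that the induced joint distribution of $(\widetilde{\mathbf{M}},\mathbf{F},\mathbf{C}([n]))$ is $2^{-\Omega(n)}$-close in total variation to $p^U_{\widetilde{\mathbf{M}}}p^U_{\mathbf{F}}p_{\mathbf{C}([n])}$, and then Remark~\ref{rmk:strong-2def} converts this exponentially small total variation into vanishing mutual information $I(\widetilde{\mathbf{M}};\mathbf{F},\mathbf{C}([n]))\to 0$, i.e., strong secrecy. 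Reliability follows by Slepian--Wolf decoding at each sink. The main obstacle I expect is the rate bookkeeping: verifying that the normalised message rates $\widetilde{R}_i/k$ converge to $R_{M_i}$, that the helper rates $R_{G_i}$ and $R_{F_i}$ are $O(\epsilon_a+\epsilon_b)$ and thus can be absorbed into a vanishing penalty by first choosing $\epsilon_a,\epsilon_b$ small, and that the sequence $\widetilde{\mathscr{C}}_j$ can be diagonalised so its limiting message and key rates are exactly $(R_{M_i},R_{K_i})$. For the linear claim, the essential observation is that each ingredient (OSRB, Csisz\'ar's Slepian--Wolf exponent, and the channel simulator) admits a random-\emph{linear}-binning version, so the entire construction stays within linear codes.
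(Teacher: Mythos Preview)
There is a fundamental mismatch between the statement you were asked to prove and the argument you supplied. The statement is the OSRB theorem (Theorem~\ref{thm:OSRBa0}): a purely probabilistic fact about distributed random binning of an i.i.d.\ source $(\mathbf{M}^n,C^n)$, asserting that if $\sum_{i\in\mathcal{S}}R_i<H(\mathbf{M}_{\mathcal{S}}\mid C)$ for every $\mathcal{S}$, then the bin indices become asymptotically uniform and independent of $C^n$. In the paper this theorem is not proved at all; it is simply quoted from \cite{yaar14} as a known tool, and the subsequent Theorem~\ref{thm:errorvanishfast} merely notes how to read an exponential rate of convergence out of the original proof.

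Your proposal, by contrast, is a proof sketch of Theorem~\ref{theorem:weak_strong} (the weak-to-strong secrecy equivalence for wireline codes). You \emph{invoke} Theorem~\ref{thm:OSRBa0} as a black box (``This is exactly the OSRB hypothesis of Theorem~\ref{thm:OSRBa0}; using its exponential-rate strengthening\ldots''), but nowhere do you establish it. A proof of the OSRB theorem itself would have to work at the level of the random binning: bound the expected total variation (or the fidelity, as in \cite{yaar14}) by exploiting the uniformity and pairwise-independence properties of the random bin assignments together with a typicality argument, and show that the resulting bound vanishes under the stated rate constraints. None of that appears in your write-up. So while your sketch is a reasonable outline of the paper's proof of Theorem~\ref{theorem:weak_strong}, it does not address the statement you were given.
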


To prove our results, we state and prove the following improved version of the OSRB theorem which states that not only the average of the total variation distance in \eqref{eqn:osrb-main-thm} converges to zero, but also exponentially fast:
\begin{theorem}\label{thm:errorvanishfast} Assuming that all the random variables in the statement of Theorem \ref{thm:OSRBa0} take values in finite sets, the expected value of the total variation of the joint pmf $P(c^n,\bar{\mathbf{m}})$ from the $p_{c^n}\prod_{i=1}^{t}p_{[2^{nR_i}]}^U$ tends to zero, \emph{exponentially fast} as $2^{-\kappa{n}}$ for some constant $\kappa$, as $n$ approaches infinity.
\end{theorem}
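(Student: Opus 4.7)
The plan is to retrace the OSRB argument of Theorem~\ref{thm:OSRBa0} while carrying the exponents explicitly at each step, exploiting the fact that the rate condition $\sum_{i\in\mathcal{S}}R_i < H(\mathbf{M}_{\mathcal{S}}|C)$ is a \emph{strict} inequality. Let $\Delta := \min_{\emptyset \neq \mathcal{S} \subseteq [t]}\bigl(H(\mathbf{M}_{\mathcal{S}}|C) - \sum_{i\in\mathcal{S}}R_i\bigr) > 0$; this is the gap that must be converted into the final exponent $\kappa$.

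First I would split the random pmf as $P(c^n,\bar{\mathbf{m}}) = P^{\mathrm{typ}}(c^n,\bar{\mathbf{m}}) + P^{\mathrm{atyp}}(c^n,\bar{\mathbf{m}})$, where $P^{\mathrm{typ}}$ restricts the sum to $(\mathbf{m}^n,c^n) \in \mathcal{A}_\epsilon^n$. A Chernoff bound on the i.i.d.\ indicators of joint typicality gives $\mathbb{P}\bigl((\mathbf{M}^n,C^n)\notin \mathcal{A}_\epsilon^n\bigr) \leq 2^{-n\gamma(\epsilon)}$ for some $\gamma(\epsilon) > 0$, so both $\|P^{\mathrm{atyp}}\|_1$ and the deviation of $\mathbb{E}[P^{\mathrm{typ}}]$ from the ideal product pmf $p(c^n)\prod_i p^U_{[2^{nR_i}]}$ are bounded by $2^{-n\gamma(\epsilon)}$ deterministically as a function of the random binning. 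This settles the atypical contribution.

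For the typical part I would use the second-moment method. Cauchy--Schwarz applied to $f = P^{\mathrm{typ}} - \mathbb{E}P^{\mathrm{typ}}$ yields $\|f\|_1 \leq \sqrt{|\operatorname{supp} f|}\,\|f\|_2$, where the support has size at most $2^{n(H(C)+\sum_i R_i)+O(n\epsilon)}$ by typicality. Jensen's inequality then bounds $\mathbb{E}\|f\|_1$ by the square root of $|\operatorname{supp}|\cdot\sum_{c^n,\bar{\mathbf{m}}}\operatorname{Var}\bigl(P^{\mathrm{typ}}(c^n,\bar{\mathbf{m}})\bigr)$. Expanding the second moment using pairwise independence of the distributed binning (identities \eqref{uniformity}--\eqref{uniformity-pair}) -- cross-moments equal $2^{-nR_i}$ when $m_i^n=\tilde m_i^n$ and $2^{-2nR_i}$ otherwise -- groups the variance into a sum indexed by the non-empty ``collision sets'' $\mathcal{S}\subseteq[t]$. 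On typical sequences one has $p(\mathbf{m}^n,c^n) \leq 2^{-n(H(\mathbf{M},C)-\epsilon)}$ and the conditional typical-shell of $\mathbf{m}^n_{\mathcal{S}^c}$ given $(\mathbf{m}^n_{\mathcal{S}},c^n)$ has size at most $2^{n(H(\mathbf{M}_{\mathcal{S}^c}|\mathbf{M}_{\mathcal{S}},C)+\epsilon)}$, so each $\mathcal{S}$-term contributes $2^{-n\Delta_{\mathcal{S}} + O(n\epsilon)}$.

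The main obstacle I anticipate is the bookkeeping in this subset expansion: all $2^t-1$ collision-set contributions must be controlled by the same universal gap $\Delta$, and the $O(n\epsilon)$ slack accumulated at each typicality estimate must not swallow it. Picking $\epsilon$ small enough that the total slack is at most $n\Delta/2$ absorbs everything into a single exponent, and combining the Cauchy--Schwarz variance bound with the atypical bound and the bias term gives the desired $\mathbb{E}\|P(c^n,\bar{\mathbf{m}}) - p_{c^n}\prod_i p^U_{[2^{nR_i}]}\|_1 \leq 2^{-\kappa n}$ for an explicit $\kappa > 0$ depending only on $\Delta$, $\epsilon$, and the alphabet sizes. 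An equivalent route (which I would fall back on if the subset expansion becomes unwieldy) is to bound $\mathbb{E}\|P-Q\|_1$ via the R\'enyi-$2$ divergence $\mathbb{E}[\sum_{x}(P(x)-Q(x))^2/Q(x)]$ restricted to the typical support and again invoke pairwise independence.
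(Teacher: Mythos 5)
Your proof is correct, but it takes a genuinely different route from the paper's. The paper's proof is essentially a re-reading of the fidelity-based argument in \cite{yaar14}: it points to Lemma~7 there (total variation is bounded by $\sqrt{1-F^2}$ where $F$ is the Bhattacharyya fidelity), then observes that the fidelity lower bound in equations (104)--(106) of \cite{yaar14}, namely $p(\mathcal{A}_\epsilon^n)\cdot\bigl(1+\sum_{\emptyset\neq\mathcal{S}}2^{n(R_{\mathcal{S}}-H(X_{\mathcal{S}}|Z)+\epsilon)}\bigr)^{-1/2}$, already converges to one exponentially fast once the rate constraints hold with strict inequality, and that the conversion $1-F\to\sqrt{1-F^2}$ preserves the exponential rate. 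You instead re-derive the whole bound from scratch: a typicality split with a Chernoff tail for the atypical part, then a second-moment / Cauchy--Schwarz argument ($\|f\|_1\leq\sqrt{|\operatorname{supp} f|}\,\|f\|_2$ combined with Jensen), expanding the variance via pairwise independence and grouping by collision sets. Both arguments bottom out in the same quantity --- the sum $\sum_{\mathcal{S}\neq\emptyset}2^{n(R_{\mathcal{S}}-H(M_{\mathcal{S}}|C)+O(\epsilon))}$ controlled by the strict rate gap $\Delta$ --- but the paper hides the second-moment computation inside the imported fidelity bound, while you expose it directly. What your route buys is a self-contained, elementary proof that does not require the reader to open up \cite{yaar14} and trace its internal steps; what it costs is length and the extra bookkeeping you correctly anticipate (keeping the $O(n\epsilon)$ slack from the typical-set size, the variance estimate, and the support bound simultaneously below $n\Delta/2$). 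One small caution: the Cauchy--Schwarz step requires that the support of $P^{\mathrm{typ}}-\mathbb{E}P^{\mathrm{typ}}$ be restricted to typical $c^n$ (size roughly $2^{nH(C)}$) times the bin-index space, which you do handle by having restricted to $\mathcal{A}_\epsilon^n$ before taking expectations; without that restriction the support would include all of $\mathcal{C}^n$ and the exponent bookkeeping would fail when $\log|\mathcal{C}|\gg H(C)$.
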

\begin{proof} This follows from the proof of the OSRB theorem (Theorem~1 in \cite{yaar14}) with minor modifications. Here we only mention how the proof should be modified without repeating the entire proof. In our re-statement of the OSRB theorem above, we have used a notation that is suitable for  our purposes here, which is different from the one used in \cite{yaar14}. However, just for the purpose of writing the modification that needs to be made in the proof given in \cite{yaar14}, we adopt the notation and definitions of \cite{yaar14}. We refer the reader to \cite{yaar14} for definition of variables that we use below. 

The proof begins by bounding the total variation distance between two distribution with their fidelity (Lemma 7 of \cite{yaar14}). The paper then states that to show the expected total variation distance goes to zero, it suffices to show that the corresponding expected fidelity term goes to one as $n$ goes to infinity. Now, to show that the total variation distance goes to zero  {exponentially fast} as $2^{-\alpha{n}}$, it suffices to show that the ``one minus the expected fidelity term" goes to zero exponentially fast. This  follows from the fact that if an arbitrary sequence $1-f_n$ tends to zero at least exponentially fast, then $\sqrt{1-f_n^2}=\sqrt{(1-f_n)(1+f_n)}$ also tends to zero exponentially fast.

This fidelity term is bounded from below in equation (104)-(106) as follows:
\begin{align}
&\e \left[F(P(z^n,b_{[1:T]});p(z^n)p^U(b_{[1:T]}))\right]\geq p(\styp)\sqrt{\dfrac{1}{1+\sum_{\emptyset\neq\ms\subseteq\mv}2^{n(R_{\ms}-H(X_{\ms}|Z)+\epsilon)}}}\label{my16}
\end{align}
where $\epsilon$ is an arbitrary positive number and  $\styp$ is the weak typical set defined as follows:
\begin{align}
\styp:=\left\{(x_{[1:T]}^n,z^n): \frac{1}{n}h(x_{[1:T]}^n|z^n)\ge H(X_{[1:T]}|Z)-\epsilon, \right\}.
\end{align}
Now, since $\epsilon$ is fixed, we know that not only probability of i.i.d.\ $X_{[1:T]}^n, Z^n$ being typical converges to one, but it also converges exponentially fast. We also have
\begin{align}\sqrt{\dfrac{1}{1+\sum_{\emptyset\neq\ms\subseteq\mv}2^{n(R_{\ms}-H(X_{\ms}|Z)+\epsilon)}}}&\geq \sqrt{{1-\sum_{\emptyset\neq\ms\subseteq\mv}2^{n(R_{\ms}-H(X_{\ms}|Z)+\epsilon)}}}\\&
\geq {{1-\sum_{\emptyset\neq\ms\subseteq\mv}2^{n(R_{\ms}-H(X_{\ms}|Z)+\epsilon)}}},
\end{align}
converges to one exponentially fast if for each $\ms\subseteq[1:T]$ we have $R_{\ms}< H(X_{\ms}|Z)-\epsilon$. Therefore, both terms on the the right hand side of \eqref{my16} converge to one exponentially fast. Thus, their product also converges to one exponentially fast.

\end{proof}

We also need a linear version of the OSRB theorem. Assume that $M_i$'s are vectors of symbols in a finite field $\mathbb{F}$. Then, $n$ i.i.d.\ repetitions of $M_i$, namely $M_i^n$ can be also understood as a (longer) sequence of symbols in  $\mathbb{F}$. Thus, a linear random binning of rate $R_i$, namely $\mathfrak{B}_i:\mathcal{M}_i^n\rightarrow\bar{\mathcal{M}_i}=\mathbb{F}^{\frac{nR_i}{\log|\mathbb{F}|}}$ can be constructed as $\bar{M}_i={A}_i M^n_i+V_i$ for some random matrices $A_i$ and vectors $V_i$ with mutually independent and uniform entries. We can now state the linear version of the OSRB theorem.
\begin{theorem}[Linear OSRB] Assuming that $M_i$'s are vectors of symbols in a finite field, Theorem \ref{thm:OSRBa0}  holds if we replace the general random binning with linear random binning. 
\end{theorem}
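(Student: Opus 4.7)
The plan is to observe that the proof of Theorem~\ref{thm:OSRBa0} (together with the exponential refinement in Theorem~\ref{thm:errorvanishfast}) uses the random binning maps $\mathfrak{B}_i$ only through two second-moment-style properties: (i) pointwise uniformity, $\mathbb{P}(\mathfrak{B}_i(m_i^n) = \bar m_i) = 2^{-nR_i}$ for every $m_i^n$ and $\bar m_i$, and (ii) pairwise independence, $\mathbb{P}(\mathfrak{B}_i(m_i^n) = \bar m_i,\ \mathfrak{B}_i(\tilde m_i^n) = \tilde{\bar m}_i) = 2^{-2nR_i}$ for all $m_i^n \ne \tilde m_i^n$. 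These are the only statistics of $\mathfrak{B}_i$ that enter the fidelity bound, and any family of random mappings satisfying (i)--(ii) yields exactly the same conclusion. Hence the task reduces to verifying (i) and (ii) for linear random binning.

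For the linear binning $\bar M_i = A_i M_i^n + V_i$ with $A_i$ and $V_i$ having independent and uniform entries over $\mathbb{F}$, property (i) follows because the uniform drift $V_i$, being independent of $A_i M_i^n$, makes $A_i m_i^n + V_i$ uniformly distributed on $\mathbb{F}^{nR_i/\log|\mathbb{F}|}$ for every fixed $m_i^n$; this is precisely \eqref{uniformity}. For (ii), one writes
\[
(A_i m_i^n + V_i,\ A_i \tilde m_i^n + V_i) = \bigl(A_i m_i^n + V_i,\ (A_i m_i^n + V_i) + A_i(\tilde m_i^n - m_i^n)\bigr),
\]
and observes that, since $\tilde m_i^n - m_i^n \ne 0$, the vector $A_i(\tilde m_i^n - m_i^n)$ is uniform on $\mathbb{F}^{nR_i/\log|\mathbb{F}|}$ and independent of $V_i$; consequently the two bin indices are jointly uniform and independent, which is \eqref{uniformity-pair}.

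With (i) and (ii) established, the entire chain of inequalities used in the proof of Theorem~\ref{thm:errorvanishfast} --- the fidelity lower bound obtained from the weak-typical-set calculation and the subsequent exponential convergence of the fidelity to one whenever $R_{\mathcal S} < H(X_{\mathcal S}|Z) - \epsilon$ for every $\emptyset \ne \mathcal S \subseteq [t]$ --- carries through verbatim, with every occurrence of the uniform/pairwise-independence probabilities now referring to the linear binning. The expected total variation distance between $P(c^n, \bar{\mathbf m})$ and $p_{c^n} \prod_i p^U_{[2^{nR_i}]}$ therefore again vanishes exponentially fast in $n$, which is the linear OSRB claim.

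The main obstacle, if any, is bookkeeping: one must check carefully that the proof of Theorem~1 in \cite{yaar14} indeed uses the binning only through (i) and (ii), with no hidden appeal to independence of bin indices of three or more distinct sequences. This can be verified by inspecting the second-moment computation $\mathbb{E}[P^2(\cdot)]$ that produces the fidelity lower bound in that proof; all cross terms that arise involve at most two distinct source sequences, so pairwise uniformity suffices. Once this is confirmed, no additional argument is needed and the linear version follows immediately.
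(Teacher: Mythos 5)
Your proposal is correct and follows essentially the same route as the paper: identify that the OSRB proof in \cite{yaar14} uses the binning only through pointwise uniformity and pairwise independence, then check that linear (affine) random binning satisfies both properties, so the argument carries through verbatim. The paper pins this down slightly more concretely by pointing to equations (94) and (98) of \cite{yaar14} as the only places where binning enters (using uniformity and pairwise independence respectively), whereas you leave that as a bookkeeping step to be inspected, but otherwise the reasoning and the explicit verification of \eqref{uniformity} and \eqref{uniformity-pair} match.
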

\begin{proof}
The only place where  random binning enters calculation in the proof of the OSRB theorem in \cite{yaar14} are equations (94) and (98) in \cite{yaar14}. But (94) in \cite{yaar14} only uses the uniformity condition which is valid for linear binning (equation \eqref{uniformity}), and (98) in \cite{yaar14} only uses the pairwise independence property that is also valid for linear binning  (equation \eqref{uniformity-pair}). 
\end{proof}

\subsection{Simulation from bin index}
Assume that $X$ is distributed uniformly on some alphabet set, and let $X^n$ be an i.i.d.\ repetitions of $X$. Let $B=\mathfrak{B}(x^n)\in \{0,1, \dots, 2^{nR}-1\}$ be a random binning of $X^n$ at rate $R$. Given any particular realization of the binning, we end up with some joint distribution $p_{BX^n}$ where $B$ is a function of $X^n$. From this joint distribution, we can consider the conditional pmf $p_{X^n|B}$. Observe that multiple $X^n$ may be mapped to $B=b$, hence, $p_{X^n|B}$ is not a deterministic channel. We now ask for the minimum random bit rate required to simulate the channel $p_{X^n|B}$ as defined by Steinberg and Verdu
in 
\cite{SteinbergVerdu}. In other words, given input $B$ of the channel $p_{X^n|B}$, we ask for the minimum number of uniformly random bits (independent of input $B$) that we need to have to be able to accurately simulate the channel $p_{X^n|B}$. In particular, if we denote the simulated channel by $\tilde{p}_{X^n|B}$, we define the total variation distance 
$$\|p_Bp_{X^n|B}-p_B\tilde{p}_{X^n|B}\|_1$$
as a measure of accuracy of channel simulation \cite{SteinbergVerdu}. 

Observe that $H(X^n|B)=H(X^n)-H(B)=n\log|\mathcal{X}|-H(B)\geq n\log|\mathcal{X}|-nR$. Intuitively speaking, to simulate conditional pmf $p_{X^n|B}$, we need a random source of average rate $\log|\mathcal{X}|-R$. The following theorem shows that the rate $\log|\mathcal{X}|-R+\delta$ (for any $\delta>0$) is  sufficient with high probability:

\begin{theorem} \label{thm:thm7} Take some $R<\log|\mathcal{X}|$ and $\delta>0$. Let $T$ be a source of randomness, uniformly distributed over an alphabet $\mathcal{T}$ satisfying $\frac{1}{n}\log|\mathcal{T}|\leq \tilde{R}=\log|\mathcal{X}|-R+\delta$. Given any realization of the binning, a deterministic simulation function $\phi(T, B)$ imposes the channel $$\tilde{p}_{X^n|B}(x^n|b)=\frac{1}{|\mathcal T|}\sum_{t}\mathbbm{1}[\phi(t, b)=x^n],$$
Then, we claim one can find a deterministic simulation function $\phi$ for any realization of the binning such that 
$$\mathbb{E}_{\mathfrak{B}}\|P_BP_{X^n|B}-P_B\tilde{P}_{X^n|B}\|_1\leq 2^{-\eta n}$$
converges to zero exponentially fast in $n$ for some $\eta>0$. Here the expectation is taken over all realizations of the binning. Furthermore, if the  binning from $X^n$ to $B$ is linear, then one can find a deterministic linear simulation function $\phi(T, B)$ satisfying the desired property.
\end{theorem}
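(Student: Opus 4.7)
My plan is to construct $\phi$ directly and bound the expected total variation distance using a concentration argument on bin sizes. Let $N(b) = |\mathfrak{B}^{-1}(b)|$; since $X^n$ is uniform on $\mathcal{X}^n$, the target conditional $p_{X^n \mid B=b}$ is uniform on $\mathfrak{B}^{-1}(b)$ with mass $1/N(b)$ per element, and $\mathbb{E}_\mathfrak{B}[N(b)] = 2^{n(\log|\mathcal{X}|-R)}$, which is exponentially smaller than $|\mathcal{T}| \ge 2^{n(\log|\mathcal{X}|-R+\delta)}$. The task therefore reduces to: given a uniform sample $T$ from a set slightly larger than the bin, produce an approximately uniform sample from that bin, where the bin depends on the binning realization.

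\textbf{General (non-linear) case.} I would fix a measurable bin-by-bin enumeration of $\mathfrak{B}^{-1}(b)$ and set $\phi(t,b) = x^n_{(t \bmod N(b))+1}(b)$ when $N(b) \ge 1$ (with an arbitrary choice when $N(b) = 0$, since such $b$ have $p_B$-probability zero). A routine counting argument yields $\lVert \tilde{p}_{X^n \mid B=b} - p_{X^n \mid B=b} \rVert_1 \le N(b)/(2|\mathcal{T}|)$. Call $b$ \emph{typical} if $N(b) \le 2\mathbb{E}[N(b)]$; for such $b$ the conditional TVD is at most $2^{-n\delta}$. Writing $N(b)$ as a sum of $|\mathcal{X}|^n = 2^{n\log|\mathcal{X}|}$ i.i.d.\ Bernoulli$(2^{-nR})$ variables, Chernoff gives $\Pr\bigl(N(b) > 2\mathbb{E}[N(b)]\bigr) \le \exp\bigl(-c\, 2^{n(\log|\mathcal{X}|-R)}\bigr)$, a doubly exponentially small quantity that easily survives a union bound over the $2^{nR}$ bins. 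Bounding TVD by $1$ on the complementary bad event and taking expectation over $\mathfrak{B}$ yields the desired $\mathbb{E}_\mathfrak{B}\lVert P_B P_{X^n|B} - P_B \tilde{P}_{X^n|B}\rVert_1 \le 2^{-\eta n}$ for any $\eta < \delta$.

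\textbf{Linear case.} Write the binning as $B = A X^n + V$. Conditioned on $A$ having full row rank—which fails with probability only $O(2^{-n(\log|\mathcal{X}|-R)})$ by standard counting of full-rank random matrices over $\mathbb{F}$—every non-empty bin is a coset of $\ker A$ of equal size $|\mathbb{F}|^d$, with $d = n\log|\mathcal{X}|/\log|\mathbb{F}| - \mathrm{rank}(A)$. Picking a right-inverse $A^{\dagger}$ of $A$ and a matrix $G$ whose columns form a basis of $\ker A$ (both as measurable functions of $A$), I would define the \emph{linear} simulator
\begin{equation*}
\phi(T, B) \;=\; G\,T' + A^{\dagger}(B - V),
\end{equation*}
where $T'$ is the truncation of $T$ to its first $d$ symbols over $\mathbb{F}$ (possible since the rate condition forces $|\mathcal{T}| \ge |\mathbb{F}|^d$). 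For uniform $T$, $GT'$ is exactly uniform on $\ker A$, so $\phi(T,B)$ is exactly uniform on $\mathfrak{B}^{-1}(B)$, giving zero conditional TVD on the full-rank event. Outside this event, bounding TVD by $1$ gives an expected TVD of order $2^{-n(\log|\mathcal{X}|-R)}$, again exponentially small.

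\textbf{Principal obstacle.} The main technical step is the bin-size concentration: producing a union bound over $2^{nR}$ bins that still decays exponentially in $n$ requires each per-bin Chernoff tail to be doubly exponentially small, which is possible only because $X$ is uniform and hence $\mathbb{E}[N(b)]$ is itself exponentially large. Everything else is elementary; in the linear case the only bookkeeping is selecting $G$ and $A^{\dagger}$ as measurable functions of $A$, and noting that exactness on the full-rank event subsumes the need for any randomized construction.
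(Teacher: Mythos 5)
Your proof is correct and follows essentially the same strategy as the paper's: construct $\phi$ by distributing the randomness $T$ over the elements of the bin $\mathfrak{B}^{-1}(b)$, bound the per-bin total variation distance by a quantity controlled by the bin size $N(b)$, and then control $N(b)$ over the randomness of the binning, with the linear case handled via the rank of $A$. The only technical divergence is that the paper controls the bad event with a Markov bound under the joint randomness of $(\mathfrak{B},B)$ combined with the symmetry of bin labels, whereas you use a per-bin Chernoff bound followed by a union bound over the $2^{nR}$ bins; both yield the needed exponential decay (and your version makes the rounding error when $N(b)\nmid|\mathcal T|$ explicit, a point the paper's proof glosses over).
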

\begin{proof}
Fix a  realization of the binning mapping $\mathfrak{B}$. Since $X^n$ is uniformly distributed, the conditional distribution of $X^n$ given $B=b$ is also uniform over the set of sequences $x^n$ that are mapped to $B=b$, \emph{i.e.,} $\{x^n:\mathfrak{B}(x^n)=b\}$. We can successfully simulate $p_{X^n|B=b}$ if we can choose a sequence $x^n$ uniformly at random from the set $\{x^n:\mathfrak{B}(x^n)=b\}$. This would be possible if $|\{x^n:\mathfrak{B}(x^n)=b\}|\leq 2^{n\tilde R}$. Hence, the total variation distance can be bounded from above as follows:
$$\|p_Bp_{X^n|B}-p_B\tilde{p}_{X^n|B}\|_1\leq \sum_b p_B(b)\mathbbm{1}[|\{x^n:\mathfrak{B}(x^n)=b\}|>2^{n\tilde R}],$$
where we used the fact that when $b$ is such that $|\{x^n:\mathfrak{B}(x^n)=b\}|$ is large, the total variation distance can be at most one. 
Thus, by taking average over all random binnings, we have 
\begin{align*}\mathbb{E}_{\mathfrak{B}}\|P_BP_{X^n|B}-P_B\tilde{P}_{X^n|B}\|_1&\leq \mathbb{P}_{\mathfrak{B}, B}[|\{x^n:\mathfrak{B}(x^n)=B\}|>2^{n\tilde R}]
\\&\overset{(a)}{=} \mathbb{P}_{\mathfrak{B}}[|\{x^n:\mathfrak{B}(x^n)=1\}|>2^{n\tilde R}].
\end{align*}
where $(a)$ follows from symmetry. Now, in a random binning, the number of sequences $x^n$ that are mapped to bin index $1$ has a Binomial distribution; we throw $|\mathcal{X}|^n$ sequences and each falls into the first bin with probability $2^{-nR}$. By Markov's inequality, we obtain
$$\mathbb{P}_{\mathfrak{B}}[|\{x^n:\mathfrak{B}(x^n)=1\}|>2^{n\tilde R}]\leq \frac{|\mathcal{X}|^n2^{-nR}}{2^{n\tilde R}}=2^{-n\delta}.$$
Finally, assume that the binning is linear, \emph{i.e.,} $B=AX^n+V$ for some matrices $A$ and $V$. Let the bin index $B$ be a vector of symbols in $\mathbb{F}$ of length $nR'$ where $R'=\frac{R}{\log|\mathbb{F}|}$. 
The set $\{x^n:\mathfrak{B}(x^n)=b\}=\{x^n:Ax^n=b-V\}$ is an affine linear subspace with dimension $\text{Null}(A)=n-\text{Rank}(A)$. This set can be written as $Q(b-V)+NT$ for some matrices $Q$ and $N$, and a uniformly distributed vector $T$ whose length is equal to the dimension of $\text{Null}(A)$. If the rank of $A$ is at least $n(R'-\frac{\delta}{\log|\mathbb{F}|})$, the dimension of the null space will be at most $n(1-R'+\frac{\delta}{\log|\mathbb{F}|})$, and a randomness of size $n(1-R')\log|\mathbb{F}|=n(\log|\mathcal X|-R+\delta)$ would suffice for channel simulation. 
Hence, the total variation distance can be bounded from above as follows:
\begin{align}\mathbb{E}_{\mathfrak{B}}\|P_BP_{X^n|B}-P_B\tilde{P}_{X^n|B}\|_1&\leq \mathbb{P}_{\mathfrak{B}}[\text{Rank}(A)<n(R'-\frac{\delta}{\log|\mathbb{F}|})].\label{eqn:syma0e}
\end{align}
However, for any $R'<1$, it is known that the probability that a random matrix $A_{nR'\times n}$ with uniform entries from $\mathbb{F}$ is not full rank vanishes exponentially fast in $n$; in fact this probability is less than $|\mathbb{F}|^{-n(1-R')}(|\mathbb{F}|-1)^{-1}$ \cite[p.4]{blake}. This completes the proof for the linear case. 

\end{proof}

\end{document}